\pdfminorversion=7
\documentclass[sigconf]{acmart}
\settopmatter{authorsperrow=4}


\AtBeginDocument{%
  \providecommand\BibTeX{{%
    \normalfont B\kern-0.5em{\scshape i\kern-0.25em b}\kern-0.8em\TeX}}}

\usepackage{color}
\usepackage{url}
\usepackage{mathtools}

    \usepackage{graphicx}

\usepackage{float}





\usepackage{xcolor}
\usepackage{graphics,float}
\usepackage{graphicx}
\usepackage{caption}
\usepackage{mathtools}
\usepackage{setspace}
\usepackage{enumitem}

\DeclarePairedDelimiter\floor{\lfloor}{\rfloor}

\usepackage{multirow}
\usepackage{color, colortbl}
\definecolor{Gray}{gray}{0.9}
\definecolor{Red}{rgb}{1,0.7,0.7 }
\definecolor{Blue}{rgb}{0.6,0.9,0.97 }
\definecolor{Green}{rgb}{0.55,0.92,0.55 }

\makeatletter
\newcommand{\vast}{\bBigg@{3}}
\newcommand{\Vast}{\bBigg@{5}}
\makeatother
\usepackage{amsmath}
\usepackage[ruled,linesnumbered]{algorithm2e}

\def\R{\mathcal{R}}    

\newtheorem{prop}{Proposition}

\newtheorem{remark}{Remark}

\usepackage{color,soul}

\copyrightyear{2021}
\acmYear{2021}
\setcopyright{acmcopyright}\acmConference[RAID '21]{24th International Symposium on Research in Attacks, Intrusions and Defenses}{October 6--8, 2021}{San Sebastian, Spain}
\acmBooktitle{24th International Symposium on Research in Attacks, Intrusions and Defenses (RAID '21), October 6--8, 2021, San Sebastian, Spain}
\acmPrice{15.00}
\acmDOI{10.1145/3471621.3471853}
\acmISBN{978-1-4503-9058-3/21/10}

\sloppy

\begin{document}


\title{The Curse of Correlations for Robust Fingerprinting of Relational Databases}


\author{Tianxi Ji}
\email{txj116@case.edu}
\affiliation{
  \institution{Case Western Reserve University}
  \city{Cleveland}
  \state{Ohio}
  \country{USA}
  }
\author{Emre Yilmaz}
\email{yilmaze@uhd.edu}
\affiliation{
  \institution{University of Houston-Downtown}
  \city{Houston}
  \state{TX}
  \country{USA}}
\author{Erman Ayday}
\email{exa208@case.edu}
\affiliation{  \institution{Case Western Reserve University}
  \city{Cleveland}
  \state{Ohio}
  \country{USA}}
\author{Pan Li}
\email{lipan@case.edu}
\affiliation{  \institution{Case Western Reserve University}
  \city{Cleveland}
  \state{Ohio}
  \country{USA}}

\begin{abstract}
Database fingerprinting   have been widely adopted to prevent unauthorized sharing of data  and identify the source of data leakages. 
Although existing schemes are robust against common attacks, like random bit flipping and subset attack, their robustness degrades significantly if   attackers utilize the inherent correlations among database entries. 
In this paper, we first demonstrate the vulnerability of   existing database fingerprinting schemes by identifying different correlation attacks: column-wise correlation attack, row-wise correlation attack, and the integration of them. 
To provide robust fingerprinting against the identified correlation attacks, we then develop mitigation techniques, which can work as post-processing steps for any off-the-shelf database fingerprinting schemes. 
The proposed mitigation techniques also preserve the utility of the fingerprinted database considering different utility metrics. 
We empirically investigate the impact of the identified correlation attacks and the performance of mitigation techniques using   real-world relational databases. 
Our results show (i) high success rates of the identified correlation  attacks  against   existing fingerprinting schemes (e.g., the integrated correlation attack can distort 64.8\%  fingerprint bits by just modifying 14.2\% entries in a fingerprinted database), and (ii) high robustness of the proposed mitigation techniques (e.g., with the mitigation techniques, the   integrated correlation attack can only distort $3\%$ fingerprint bits). 

\end{abstract}

\begin{CCSXML}
<ccs2012>
<concept>
<concept_id>10002978.10002991.10002996</concept_id>
<concept_desc>Security and privacy~Digital rights management</concept_desc>
<concept_significance>500</concept_significance>
</concept>
<concept>
<concept_id>10002978.10003018.10003021</concept_id>
<concept_desc>Security and privacy~Information accountability and usage control</concept_desc>
<concept_significance>500</concept_significance>
</concept>
<concept>
<concept_id>10002978.10003029.10011150</concept_id>
<concept_desc>Security and privacy~Privacy protections</concept_desc>
<concept_significance>300</concept_significance>
</concept>
</ccs2012>
\end{CCSXML}

\ccsdesc[500]{Security and privacy~Digital rights management}
\ccsdesc[500]{Security and privacy~Information accountability and usage control}
\ccsdesc[300]{Security and privacy~Privacy protections}


\keywords{Robust fingerprinting;   databases; correlation attacks; data sharing}


  \maketitle




\section{Introduction}

Relational databases (or relations) have become the most popular database systems ever since 1970s. A relation is defined as a set of data records  with the same attributes~\cite{codd2002relational}. 
Constructing and sharing of the relations are critical to the vision of a data-driven future that benefits all human-beings. {\color{black}It   supports broader range of tasks in real-life than just sharing database statistics or  machine learning models trained from the database.} 
For example, a relational database owner (who collects data from individuals and constructs the dataset) can benefit from outsourced computation (e.g., from service providers (SP) like Amazon Elastic Compute Cloud), let other SPs analyze its data  (e.g., for personal advertisements), or  exchange datasets for collaborative research after data use agreements.


Most of the time, sharing a database with an authorized  SP (who is authorized to receive/use the database) is done via consent of the database owner. However, when such   databases are shared or leaked beyond the authorized SPs, individuals' (people who contribute their data in the  database) privacy is violated, and hence preventing unauthorized sharing of databases is of great importance. Thus, database owners want to (i) make sure that shared data is used only by the authorized parties for specified purposes and (ii) discourage such parties from releasing the received datasets to other unauthorized third parties (either intentionally or unintentionally). Such data breaches cause financial and reputational damage to database owners. 
For instance, it is reported that the writing site Wattpad suffered a major data breach in July 2020; over 270 million individuals'  data were sold on a third party   forum in the darknet \cite{Wattpadd66:online}. 
Therefore, identifying the source of data breaches is crucial for database owners to hold the identified party responsible.

Digital fingerprinting is a technology that allows to identify the source of data breaches by embedding a unique mark into each shared copy of a digital object. Unlike digital watermarking, in fingerprinting, the embedded mark must be unique to detect the guilty party who is responsible for the leakage. Although the most prominent usage of fingerprinting is in the multimedia domain \cite{cox1997secure,cox2002digital,johnson2001information}, fingerprinting techniques for databases have also been developed \cite{li2005fingerprinting,guo2006fingerprinting,liu2004block,lafaye2008watermill}. These techniques change database entries at different positions when sharing a database copy with a SP. 
However, existing fingerprinting schemes for databases have been developed to embed fingerprints in continuous-valued numerical entries (floating points) in relations. On the other hand, fingerprinting discrete (or categorical) values is more challenging, since the number of possible values (or instances) for a data point is much fewer. Hence, in such databases, a small change in the value of a data point (as a fingerprint) can significantly affect the utility. 
In addition, existing fingerprinting schemes for databases do not consider various inherent correlations between the data records in a database. A malicious party having a fingerprinted copy of a database can detect and distort the embedded fingerprints using its knowledge about the correlations in the data. For example, the zip codes are strongly correlated with street names in a demographic database  making common fingerprinting schemes venerable to attacks utilizing such correlations. Thus,   to provide robustness against correlation attacks (which utilizes the correlations between attributes and data records to infer the potentially fingerprinted entries), we need to consider  such correlations when developing  fingerprinting schemes for relational database. 

In this work, we first identify correlation attacks against the existing database fingerprinting schemes. Namely, we present column-wise correlation attack, row-wise correlation attack, and the integration of both. To launch these attacks, a malicious SP utilizes its prior knowledge about correlations  between the columns (attributes) of database, statistical relationships between the rows (data records), and the combination of both.  
After launching these attacks on a fingerprinted database, the malicious SP can easily distort the added fingerprint to mislead the fingerprint extraction algorithm and cause the database owner to accuse innocent parties. For example, we show that by changing $14.2\%$   entries in a database, the integration of row- and column-wise correlation attack can distort $64.8\%$ fingerprint bits and cause the database owner falsely accuse innocent SPs with high probability.  
This suggests that existing database fingerprinting schemes are vulnerable to identified correlation attacks, and mitigation techniques are in dire need.

To reduce the identified vulnerability in existing database fingerprinting schemes, we propose   novel mitigation techniques  to provide robust fingerprinting   that can alleviate the   correlation attacks. 
Although,   we describe the proposed techniques for a specific vanilla database fingerprinting scheme \cite{li2005fingerprinting}, they can be applied  to other   schemes as well. In other words, the proposed mitigation techniques can work as  post-processing steps for any off-the-shelf database  fingerprinting schemes   and make them robust against potential attacks that utilize the inherent  data correlations. 
The proposed mitigation techniques utilize database owner's prior knowledge on the column- and row-wise correlations. In particular, to mitigate the column-wise correlation attack, the database owner modifies some of the non-fingerprinted data entries to make the post-processed fingerprinted database have column-wise correlations close to that of her prior knowledge. The data entry modification plans are determined from the solutions to a set of ``optimal transportation'' problems~\cite{courty2016optimal}, {\color{black}each of which transports the mass of the marginal distribution of a specific attribute (column) to make it resemble the reference marginal distribution computed from database owner's prior knowledge while minimizing the transportation cost.} 
To alleviate the row-wise correlation attack, the database owner modifies limited number of non-fingerprinted data entries by solving a combinatorial search problem to make the post-processed fingerprinted database have row-wise statistical relationships that are far away from that of her prior knowledge. We show that even if the malicious SP has access to  the exactly same prior knowledge (i.e., data correlation models) with the database owner,   the proposed mitigation techniques can effectively reduce the vulnerability caused by correlation attacks. 
The proposed mitigation techniques also maintain the utility of the post-processed fingerprinted database   by (i) encoding the database entries as integers, such that the least significant bit (LSB) carries the least information, and adding the fingerprint by only changing the LSBs; and (ii) changing only a small number of database entries.



We use an real-world Census relational database to validate the effectiveness of the proposed robust fingerprinting scheme against  the identified correlation attacks.   In particular,  we show that the  malicious SP can only compromise $3\%$ fingerprint bits, even if it launches the powerful integrated correlation attack on the Census database. Thus,  it will   be held as responsible for data leakage.

We summarize the main contributions of this paper as follows:
\begin{itemize}
  \item We identify correlation attacks 
  that can distort large portion of the fingerprint bits in the  existing database fingerprinting scheme and cause the database owner to accuse innocent SPs with high probability.
  \item We propose robust fingerprinting scheme that involves novel mitigation techniques to alleviate the impact of the identified correlation attacks. The proposed mitigation techniques can work as post-processing steps for any off-the-shelf database fingerprinting schemes.
  \item We   investigate the impact of the identified correlation attacks and the proposed mitigation techniques on an real-world relational database. We show that the correlation attacks are more powerful than traditional attacks, because they can  distort more   fingerprint bits with less utility loss. On the other hand,   the  mitigation techniques can effectively alleviate these attacks and maintain database utility even if  the malicious SP   uses data correlation models that are directly calculated from the data.
  
\end{itemize}

The rest of this paper is organized as follows.  We review related works on existing fingerprinting schemes in Section \ref{sec:related_work}, which is followed by the description on the considered vanilla fingerprinting scheme in Section \ref{sec:vanilla}. In Section \ref{sec:system-threat}, we present the system and threat models, and evaluation metrics. Section \ref{sec:corr_attacks} introduces the identified correlation attacks. In Section \ref{sec:robust_fp}, we develop   robust fingerprinting against the identified attacks. We evaluate the impact of correlation attacks and the performance of the proposed mitigation techniques in Section \ref{sec:eva}. Finally, Section \ref{sec:conclusion} concludes the paper.

\section{Related Work}
\label{sec:related_work}


We first briefly review the works on multimedia fingerprinting, and then focus on existing works on fingerprinting relational database.

Large volume of research on  watermarking and fingerprinting  have targeted multimedia, e.g.,  images~\cite{van1994digital,gonge2013robust}, audio \cite{bassia2001robust,  kirovski2001robust},  videos~\cite{swanson1998multiresolution}, and text documents~\cite{Brassil94hidinginformation,Brassil1995}. Such works benefit from the high redundancy in
multimedia, such that the inserted watermark or fingerprint is imperceptible for human beings.    However, the aforementioned multimedia fingerprinting techniques cannot be applied to fingerprint relational databases. The reason is that a database fingerprinting scheme should be robust against common database operations, such as union, intersection, and updating, whereas multimedia fingerprinting schemes are designed to be robust against operations, like compression and formatting.

Database fingerprinting schemes are usually discussed together with database watermarking schemes \cite{li2003constructing} due to their similarity. 
In the seminal work \cite{agrawal2003watermarking}, Agrawal et al. introduce a watermarking framework for relations with numeric attributes by assuming that the database consumer can tolerate a small amount of error in the watermarked databases. Then, based on \cite{agrawal2003watermarking}, some database fingerprinting schemes have been devised. Specifically, Guo et al. \cite{guo2006fingerprinting} develop a two-stage fingerprinting scheme: the first stage is used to prove database ownership, and the second stage is designed for extracted fingerprint verification. Li et al.~\cite{li2005fingerprinting} develop a database fingerprinting scheme by extending \cite{agrawal2003watermarking} to enable the insertion and extraction of arbitrary bit-strings in relations. Furthermore, the authors provide an extensive robustness analysis (e.g., about the upper bound on the probability of detecting incorrect but valid fingerprint from the pirated database) of their scheme. Although \cite{guo2006fingerprinting,li2005fingerprinting} pseudorandomly determine the fingerprint positions in a database, they are not robust against our identified correlation attacks. 
In this paper, we consider \cite{li2005fingerprinting}  as the vanilla fingerprinting scheme and corroborate its vulnerability against correlation attacks. 
Additionally, Liu et al. \cite{liu2004block} propose a  database fingerprinting scheme by dividing  the relational database into blocks and ensuring that certain bit positions of the data at certain blocks contain specific values. In~\cite{liu2004block},  since the fingerprint is embedded block-wise, it is more susceptible to attacks utilizing correlations in the data. 
As a result, incorporating data correlations in database fingerprint schemes is critical to provide robustness against correlation attacks.

Recently,  Yilmaz et al. \cite{yilmaz2020collusion} develop a probabilistic fingerprinting scheme by explicitly considering the correlations (in terms of conditional probabilities) between data points in data record of a single individual. Ayday et al.~\cite{ayday2019robust} propose an optimization-based fingerprinting scheme for sharing  personal sequential data by minimizing the probability of collusion attack with data correlation being one of the constraints. Our work differs from these works since we focus on  developing robust fingerprint scheme for relational databases, which (i) contain large amount of data records from different individuals, (ii) include both column- and row-wise correlations, and (iii) have different utility requirements.




\section{The Vanilla Fingerprint Scheme}
\label{sec:vanilla}

In this work, we consider the  fingerprinting scheme proposed in \cite{li2005fingerprinting} as the vanilla scheme, for which we show the vulnerability and develop the proposed scheme.  
Assume a database owner shares her data with multiple service providers (SPs). The fingerprint  of a specific SP is obtained using a cryptographic hash function, whose input is the concatenation of the database owner's secret key and the SP's public series number. For fingerprint insertion, the vanilla scheme  pseudorandomly selects one bit position of one attribute of some data records in the database and replaces those bits with the results obtained from the exclusive or (XOR) between mask bits and fingerprint bits, both of which are also determined pseudorandomly.  
For fingerprint extraction, the scheme locates the exact positions of the potentially changed bits,   calculates the fingerprint bits by XORing those bits with the exact mask bits, and finally recovers each bit in the  fingerprint bit-string via majority voting, since each fingerprint bit can be  used to mark many different positions.
To preserve the utility of the fingerprinted database, we will let the vanilla scheme only change the least significant bit (LSB) of selected database entries. {\color{black}For completeness,   we show the steps to insert fingerprint into a database, and the steps to extract fingerprint from a pirated   database, in Algorithms \ref{algo:vanilla-insert} and \ref{algo:vanilla-extract}, respectively.} In Appendix \ref{sec:utility_LSB},  we will empirically  validate  that only changing the LSB indeed leads to higher utility than altering one of the least $k$ significant bits (L$k$SB) of selected entries.

\begin{algorithm}
\small
\SetKwInOut{Input}{Input}
\SetKwInOut{Output}{Output}
\Input{The original relational database $\mathbf{R}$, fingerprinting ratio $\gamma$, database owner's secret key $\mathcal{K}$, pseudorandom number sequence generator $\mathcal{U}$, and the SP's series number $n$ (which can be public).}
\Output{The vanilla fingerprinted relational database $\widetilde{\mathbf{R}}\Big(\mathrm{FP},\emptyset,\emptyset \Big)$.}

Generate the fingerprint bit string of SP $n$, i.e., $f_{\mathrm{SP}_n} = Hash(\mathcal{K}|n)$;

\ForAll{data record $\boldsymbol{r}_i\in\mathbf{R}$}{
\If{$\mathcal{U}_1(\mathcal{K}|\boldsymbol{r}_i.\mathrm{primary\ key})\ \mathrm{mod}\ \gamma = 0$}{
\CommentSty{//fingerprint this data record}

$\mathrm{attribute\_index}\ p = \mathcal{U}_2(\mathcal{K}|\boldsymbol{r}_i.\mathrm{primary\ key})\ \mathrm{mod}\ |\mathcal{F}|$.
\CommentSty{//fingerprint this attribute ($|\mathcal{F}|$ is the cardinality of the attributes set)}

Set $\mathrm{mask\_bit}\ x = 0$, if $\mathcal{U}_3(\mathcal{K}|\boldsymbol{r}_i.\mathrm{primary\ key})$ is even; otherwise set $x=1$.

$\mathrm{fingerprint\_index}\ l=\mathcal{U}_4(\mathcal{K}|\boldsymbol{r}_i.\mathrm{primary\ key})\ \mathrm{mod}\ L$. \CommentSty{//$L$ is the length of the fingerprint bit-string}

$\mathrm{fingerprint\_bit}\ f = f_{\mathrm{SP}_n}(l)$.


$\mathrm{mark\_bit}\ m = x\oplus f$.

 Set the LSB of $\boldsymbol{r}_i.p$ to $m$. 

}

}

 Return $\widetilde{\mathbf{R}}\Big(\mathrm{FP},\emptyset,\emptyset \Big)$.

		\caption{Fingerprint insertion phase of the vanilla fingerprinting scheme \cite{li2005fingerprinting}}
\label{algo:vanilla-insert}
\end{algorithm}

\begin{algorithm}
\small
\SetKwInOut{Input}{Input}
\SetKwInOut{Output}{Output}
\Input{The leaked relational database $\overline{\mathbf{R}}$, fingerprinting ratio $\gamma$, database owner's secret key $\mathcal{K}$, pseudorandom number sequence generator $\mathcal{U}$, and a fingerprint template $(?,?,\cdots,?)$, where $?$ represents unknown value.}
\Output{The extracted fingerprint from the leaked database.}

\ForAll{$l\in[1,L]$}{

$count[l][0] = count[l][1] = 0$. \CommentSty{//$count[l][0]$ and  $count[l][1]$ are number of votes for $f(l)$ to be 0 or 1, respectively.}

}


\CommentSty{//scan all data records and obtain the counts for each fingerprint bit}

\ForAll{data record $\boldsymbol{r}_i\in\mathbf{R}$}{
\If{$\mathcal{U}_1(\mathcal{K}|\boldsymbol{r}_i.\mathrm{primary\ key})\ \mathrm{mod}\ \gamma = 0$}{ 

$\mathrm{attribute\_index}\ p = \mathcal{U}_2(\mathcal{K}|\boldsymbol{r}_i.\mathrm{primary\ key})\ \mathrm{mod}\ |\mathcal{F}|$.

Set $\mathrm{mark\_bit}\ m$ as the LSB of $\boldsymbol{r}_i.p$.

Set $\mathrm{mask\_bit}\ x = 0$, if $\mathcal{U}_3(\mathcal{K}|\boldsymbol{r}_i.\mathrm{primary\ key})$ is even; otherwise set $x=1$.

$\mathrm{fingerprint\_bit}\ f = m\oplus x$.

$\mathrm{fingerprint\_index}\ l=\mathcal{U}_4(\mathcal{K}|\boldsymbol{r}_i.\mathrm{primary\ key})\ \mathrm{mod}\ L$.

$count[l][f] = count[l][f]+1$.



}

}

\CommentSty{//recover the fingerprint bit string}

\ForAll{$l\in[1,L]$}{
\If{count[l][0] = count[l][1]}{return none suspected}

$f(l)=0$ if $count[l][0] > count[l][1] = 0$.

$f(l)=1$ if $count[l][0] < count[l][1] = 0$.

}
 
 Return the extracted fingerprint bit string $f$.

		\caption{Fingerprint extraction phase of the vanilla fingerprinting scheme \cite{li2005fingerprinting}}
\label{algo:vanilla-extract}
\end{algorithm}

In practice, one can choose any database fingerprinting scheme as the vanilla scheme, because our proposed mitigation techniques are independent of the adopted vanilla scheme, and they can be used as post-processing steps on top of any existing database fingerprinting schemes. 
The reason we choose the aforementioned  vanilla scheme  is because
(i) it is shown to have high robustness, e.g., the probability of detecting no fingerprint as a result of random bit flipping attack (a common attack against fingerprinting schemes, as will be discussed in Section~\ref{sec:threat_models}) is upper bounded by ${(|SP|-1)}/{2^L}$, where $|SP|$ is the number of SPs who have received the fingerprinted copies and $L$ is the length of the fingerprint bit-string, (ii) it is shown to be robust even if some fingerprinted entries are identified by a malicious SP, because it applies majority voting on all the fingerprinted entries to extract the fingerprint bit-string, 
and (iii) it can easily be extended to incorporate Boneh-Shaw code \cite{boneh1998collusion} to defend against collusion attacks.  {\color{black}Our developed  robust fingerprinting scheme inherits all the properties of the vanilla  scheme because (i) it uses the  vanilla scheme as the building block and (ii) it does not alter the entries that have already been changed by the vanilla scheme (due to fingerprinting insertion).}



\section{System and Threat Models}
\label{sec:system-threat}







First, we introduce the  nomenclature for different databases obtained by applying various techniques. We denote the database owner's  (i.e., Alice) original database as $\mathbf{R}$, a fingerprinted database shared by her as $\widetilde{\mathbf{R}}$, and the pirated database leaked by a malicious SP as $\overline{\mathbf{R}}$,  respectively. 
Both $\widetilde{\mathbf{R}}$ and $\overline{\mathbf{R}}$ are represented using 3 input parameters showing the techniques that are adopted to generate them.
3 input parameters for $\widetilde{\mathbf{R}}(\alpha,\beta,\eta)$ represent which processes have been applied to the database during fingerprinting, where (i) $\alpha$ represents the vanilla fingerprinting, (ii) $\beta$ represents the proposed mitigation technique against the row-wise correlation attack, and (iii) $\eta$ represents the proposed mitigation technique against the column-wise correlation attack. On the other hand, 3 input parameters for $\overline{\mathbf{R}}(\alpha,\beta,\eta)$ represent which attacks have been conducted by the malicious SP on the fingerprinted database, where (i) $\alpha$ represents the random bit flipping attack, (ii) $\beta$ represents the row-wise correlation attack, and (iii) $\eta$ represents the column-wise correlation attack. We provide the details of these attacks and mitigation techniques in Sections \ref{sec:corr_attacks} and \ref{sec:robust_fp}, respectively. We will also use $\widetilde{\mathbf{R}}$ (or $\overline{\mathbf{R}}$) when referring  to a generic fingerprinted (or pirated) database when its input parameters are clear from the context.

We summarize the frequently used notations in  Table~\ref{main_notations}. For instance, $\widetilde{\mathbf{R}}\big(\mathrm{FP},\mathrm{Dfs_{row}(\mathcal{S}')},\mathrm{Dfs_{col}(\mathcal{J}')}\big)$ represents a fingerprinted database that is generated by applying the vanilla fingerprinting scheme ($\mathrm{FP}$) on the original database $\mathbf{R}$ followed by two proposed mitigation techniques $\mathrm{Dfs_{row}(\mathcal{S}')}$ and $\mathrm{Dfs_{col}(\mathcal{J}')}$ to alleviate the potential correlation attacks (as will be discussed in Sections~\ref{sec:col_defense} and~\ref{sec:row-defense}).  
Here, $\mathcal{S}'$ (or $\mathcal{J}'$) is the database owner's prior knowledge on the row-wise (or column-wise) correlations in the database. 
Similarly, $\overline{\mathbf{R}}\big(\emptyset,\mathrm{Atk_{row}(\mathcal{S})},\mathrm{Atk_{col}}(\mathcal{J})\big)$ represents a pirated database that is generated by a malicious SP by first launching the row-wise correlation attack $\mathrm{Atk_{row}}(\mathcal{S})$, and then the column-wise correlation attack $\mathrm{Atk_{col}}(\mathcal{J})$ (as will be discussed in Section~\ref{sec:col-wise-attack} and \ref{sec:row-wise-atack}, and $\emptyset$ means random bit flipping attack is not applied).  
Here, $\mathcal{S}$ (or $\mathcal{J}$) is the malicious SP's prior knowledge on the row-wise (or column-wise) correlations of the database.   
In general, $\mathcal{S}'\neq\mathcal{S}$ and $\mathcal{J}'\neq\mathcal{J}$, which is referred to as the prior knowledge asymmetry between the database owner and the malicious SP. To the advantage of the malicious SP, we assume that the malicious SP can have access to the correlation models that are directly calculated from the database, i.e., its prior knowledge is as accurate as that of the database owner. 
In the future work, we will also   investigate the scenario where the database owner even has less accurate prior knowledge compared with the malicious SPs.

\begin{table*}[htb]
\begin{center}
 \begin{tabular}{c | c  } 
 \hline
 \hline
  $\mathbf{R}$ & the  original database owned by the database owner (Alice)\\
   \hline
   $\widetilde{\mathbf{R}}$ & a generic fingerprinted database shared by the database owner\\
   \hline
   $\overline{\mathbf{R}}$ & a generic pirated database generated by the malicious SP\\
   \hline
  \multirow{3}{*}{$\widetilde{\mathbf{R}}(\alpha,\beta,\eta)$} & the fingerprinted database obtained by applying   (i) $\alpha$, the vanilla fingerprinting scheme,  \\
  & (ii) $\beta$, the mitigation technique against the row-wise correlation attack,  \\
   & and (iii) $\eta$, the mitigation technique against the column-wise correlation attack  in sequence\\
   \hline
   \multirow{2}{*}{$\overline{\mathbf{R}}(\alpha,\beta,\eta)$} & the pirated database generated by the malicious SP by applying (i) the random bit flipping attack $\alpha$, \\
   & (ii) the row-wise correlation attack $\beta$,  and (iii) the column-wise correlation attack $\eta$ in sequence\\
   \hline
  $\mathcal{S}'$ and $\mathcal{J}'$ & database owner's prior knowledge on the row-wise correlations and column-wise correlations\\
  \hline
    $\mathcal{S}$ and $\mathcal{J}$ & the malicious SP's prior knowledge on the row-wise correlations and column-wise correlations\\
  \hline
    $\widetilde{\mathcal{S}}$ and $\widetilde{\mathcal{J}}$ & the empirical row-wise and column-wise correlations obtained from a generic  fingerprinted database $\widetilde{\mathbf{R}}$\\
    \hline
 $\mathrm{Atk_{col}}(\mathcal{J})$ & the column-wise correlation attack launched by the malicious SP by using prior knowledge $\mathcal{J}$ \\ 
   \hline
  $\mathrm{Dfs_{row}}(\mathcal{S}')$ & the mitigation technique using prior knowledge $\mathcal{S}'$  to alleviate row-wise correlation attack \\
    \hline
  $\mathrm{Dfs_{col}}(\mathcal{J}')$ & the mitigation technique using prior knowledge $\mathcal{J}'$  to alleviate column-wise correlation attack \\
 \hline
 \hline
\end{tabular}
\caption{Frequently used notations in the paper.}
\label{main_notations}
\end{center}
\end{table*}

\subsection{System Model}
\label{system_model}



We present the vanilla   fingerprint system model  in Figure \ref{fig:fp_general_system}. Specifically, we consider the database owner (Alice) with a  {categorical} relational database   $\mathbf{R}$, which includes the data records of $M$ individuals.  We denote the set of attributes of the individuals as $\mathcal{F}$ and the $i$th row (data record) in $\mathbf{R}$ as $\boldsymbol{r}_i$. 
Alice shares her data with multiple service providers (SPs) 
to  receive specific services from them. To prevent unauthorized redistribution of her database by a malicious SP, Alice includes a unique fingerprint in each copy of her database when sharing it with a SP. The fingerprint bit-string associated to SP $i$ ($\mathrm{SP}_i$) is denoted as $f_{\mathrm{SP}_i}$, and the vanilla fingerprinted dataset received by $\mathrm{SP}_i$ is represented as $\widetilde{\mathbf{R}}_{\mathrm{SP}_i}(\mathrm{FP},\emptyset,\emptyset)$. 
Both $f_{\mathrm{SP}_i}$ and $\widetilde{\mathbf{R}}_{\mathrm{SP}_i}(\mathrm{FP},\emptyset,\emptyset)$ are obtained  using the vanilla fingerprint scheme discussed in Section \ref{sec:vanilla}, which changes entries of $\mathbf{R}$ at different positions (indicated by the yellow dots in Figure \ref{fig:fp_general_system}.   
If a malicious SP (e.g., $\mathrm{SP}_i$)   pirates and  redistributes Alice's database, she is able to identify $\mathrm{SP}_i$ as the traitor by extracting its fingerprint in $\widetilde{\mathbf{R}}_{\mathrm{SP}_i}(\mathrm{FP},\emptyset,\emptyset)$ {\color{black}as long as the data entries are not significantly modified (e.g., when less than 80\% entries are changed or removed)}. 


\begin{figure}[htb]
  \begin{center}
     \includegraphics[width= 0.7\columnwidth]{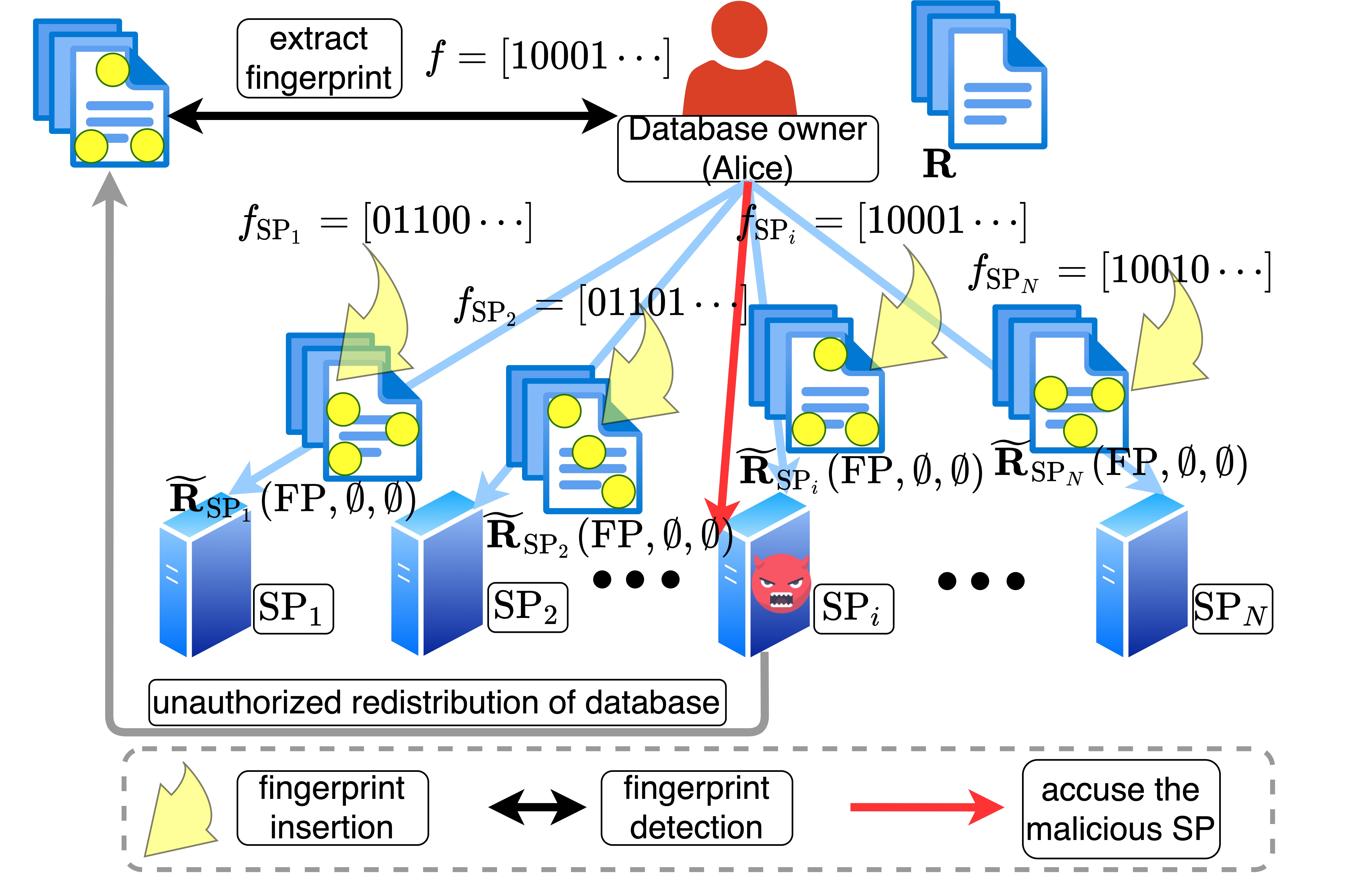}
      \end{center}
  \caption{\label{fig:fp_general_system}  The vanilla fingerprinting system, where Alice adds a unique fingerprint in each shared copy of her database. She is able to identify the malicious SP who pirates and redistributes her database {\color{black}as long as the data entries are not significantly modified (e.g., when less than 80\% of the entries are changed or removed)}.}
\end{figure}

\begin{figure*}[htb]
  \begin{center}
     \includegraphics[width= 1.65\columnwidth]{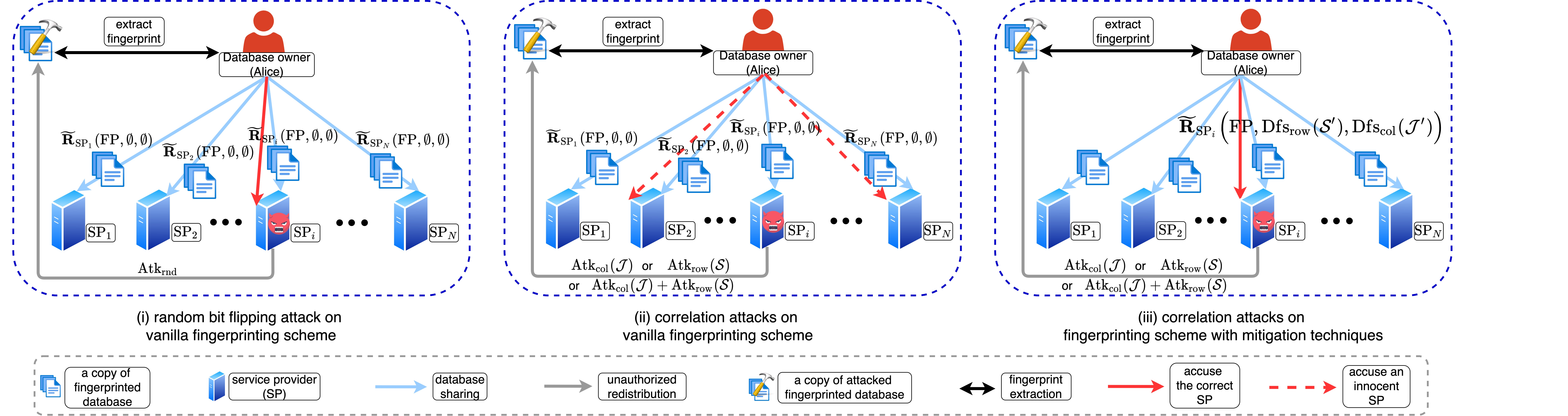}
      \end{center}
  \caption{\label{fig:attacks}   (i) If Alice inserts fingerprinting using the vanilla scheme, and the malicious $\mathrm{SP}_i$ conducts random bit flipping attack, i.e., $\mathrm{Atk_{rnd}}$, on its received copy, i.e.,   $\widetilde{\mathbf{R}}_{\mathrm{SP}_i}(\mathrm{FP},\emptyset,\emptyset)$ and redistributes the data. Then, with high probability, Alice can correctly accuse it for data leakage. (ii) If the malicious $\mathrm{SP}_i$ conducts any correlation attack, e.g., the column-wise correlation attack ($\mathrm{Atk_{col}}(\mathcal{J})$), the row-wise correlation attack ($\mathrm{Atk_{row}}(\mathcal{S})$), or the combination of them, on the vanilla fingerprinted database. Then, with high probability, Alice cannot identify it as the traitor, and she will accuse other innocent SPs. (iii)  If Alice applies the mitigation techniques, i.e., the column-wise correlation defense ($\mathrm{Dfs_{col}}(\mathcal{J}')$) and the row-wise correlation defense ($\mathrm{Dfs_{row}}(\mathcal{S}')$), after the vanilla fingerprinting scheme, and shares $\widetilde{\mathbf{R}}(\mathrm{FP},\mathrm{Dfs_{row}}(\mathcal{S}'),\mathrm{Dfs_{col}}(\mathcal{J}'))$ with $\mathrm{SP}_i$. Then, with high probability, she can correctly identify $\mathrm{SP}_i$ as the traitor even if it conducts any of the correlation attack on its received copy.}
\end{figure*}

\subsection{Threat Model}
\label{sec:threat_models}
Fingerprinted database is subject to various attacks summarized in the following sections. In Figure \ref{fig:attacks}, we show some representative ones that are studied in this paper. 
Note that in all considered attacks, a malicious SP can change/modify most of the entries in $\widetilde{\mathbf{R}}$ to distort the fingerprint (and to avoid being accused). However, such a pirated database will have significantly poor utility (as will be introduced in Section \ref{sec:utility_metrics}).  {\color{black}As discussed in Section \ref{sec:vanilla}, we let the vanilla fingerprint scheme only change the LSBs of data entries to preserve data utility. Thus, all considered attacks also change the LSBs of the selected entries in $\widetilde{\mathbf{R}}$ to distort the fingerprint.}


\subsubsection{Random Bit Flipping Attack} In this attack, to pirate a database, a malicious SP   selects  random entries in $\widetilde{\mathbf{R}}$ and flips their LSBs \cite{agrawal2003watermarking}. The flipped entries are still in the domain of the corresponding attributes. 
The considered vanilla fingerprint scheme is robust against this attack \cite{li2005fingerprinting} as shown in Figure~\ref{fig:attacks}(i),   Alice shares fingerprinted copies of her database by only applying  $\mathrm{FP}$. If a malicious SP ($\mathrm{SP}_i$) tries to distort the fingerprint in $\widetilde{\mathbf{R}}(\mathrm{FP},\emptyset,\emptyset)$ using the random bit flipping attack (i.e., $\mathrm{Atk_{rnd}}$), and then   redistributes it, Alice can still detect $\mathrm{SP}_i$'s fingerprint in the pirated  copy with a high probability, 
and correctly accuse $\mathrm{SP}_i$ of data leakage.

\subsubsection{Subset and Superset Attacks}
In subset attack, a malicious SP generates a pirated copy of $\widetilde{\mathbf{R}}$ by randomly selecting  data records from it.   Superset attack is the dual attack of subset attack. In this attack, the malicious SP mixes $\widetilde{\mathbf{R}}$ with other databases to create a pirated  one. These two attacks are considered to be  weak attacks. For example, for subset attack, to compromise just one specific  bit in the inserted  fingerprint bit-string,  the malicious SP must exclude all records that are marked by that  bit~\cite{li2005fingerprinting}.

\subsubsection{Correlation Attacks}
In correlation attacks, a malicious SP utilizes the inherent correlations in the data to more accurately identify the fingerprinted entries, and hence distort the fingerprint. 
Since the subset and superset attack are not as powerful as the bit flipping attack \cite{yilmaz2020collusion}, we consider developing the correlation attacks based on the random bit flipping attack. In the following, we provide the high level descriptions of two main correlation attacks (details of these attacks are in Section \ref{sec:corr_attacks}).



In column-wise correlation attack, i.e., $\mathrm{Atk_{col}}(\mathcal{J})$, we assume that the malicious SP has prior knowledge about the correlations among each pair of attributes  (or columns in the database) characterized by the set of joint probability distributions $\mathcal{J}$. Once receiving the fingerprinted  database $\widetilde{\mathbf{R}}$, the malicious SP first calculates a new set of joint probability distributions based on  $\widetilde{\mathbf{R}}$.  
Then, it compares 
the new joint distributions with its prior knowledge $\mathcal{J}$, and flips the entries in $\widetilde{\mathbf{R}}$ that leads to large discrepancy between them.





In row-wise correlation attack,  i.e., $\mathrm{Atk_{row}}(\mathcal{S})$, we consider that the individuals belong to  different communities (e.g.,  social circles decided by friendship,  or families determined by kinship), and assume that the malicious SP has the prior knowledge $\mathcal{S}$, which contains (i) each individual's membership  to the communities and (ii) the statistical relationships of   pairs of individuals belonging to  the same community.  
Once it receives the fingerprinted  database $\widetilde{\mathbf{R}}$, the malicious SP first calculates a new set of statistical relationships based on $\widetilde{\mathbf{R}}$, then it compares the newly computed  statistical relationships with   $\mathcal{S}$, and changes the entries that leads to large discrepancy between the two sets of statistical relationships.  

Figure~\ref{fig:attacks}(ii) shows the scenario, where Alice identifies the source of the data leakage wrong and accuses innocent SPs if she uses the vanilla fingerprinting scheme, whereas, $\mathrm{SP}_i$ conducts more advanced attacks to distort the fingerprint. These attacks include   $\mathrm{Atk_{col}}\big(\mathcal{J}\big)$,   $\mathrm{Atk_{row}}\big(\mathcal{S}\big)$, and the combination of them. Finally, Figure~\ref{fig:attacks}(iii) shows that if Alice uses  the proposed mitigation  techniques (i.e., $\mathrm{Dfs_{row}}(\mathcal{S}')$ and  $\mathrm{Dfs_{col}}(\mathcal{J}')$ (as will be discussed in Section~\ref{sec:robust_fp}) after $\mathrm{FP}$ to improve the robustness of the added fingerprint and shares $\widetilde{\mathbf{R}}\big(\mathrm{FP},   \mathrm{Dfs_{row}}(\mathcal{S}'), \mathrm{Dfs_{col}}(\mathcal{J}')\big)$, 
then, even though $\mathrm{SP}_i$ conducts the identified  correlation attacks, Alice can still identify $\mathrm{SP}_i$ to be responsible for leaking the data with high probability.

\subsubsection{Collusion Attack}
Fingerprinted databases are also susceptible to collusion attack, where multiple malicious SPs ally together to generate a pirated database from their unique fingerprinted copies. In   cryptography literature, many works have attempted to develop collusion resistant fingerprinting schemes  \cite{boneh1998collusion,boneh1995collusion,yacobi2001improved,pfitzmann1997asymmetric}.  Our proposed mitigation techniques can also be used with a collusion-resistant vanilla fingerprinting scheme \cite{boneh1998collusion} to   provide some level of robustness against colluding SP.  In this work, we   mainly focus on   correlation attacks from a single-handed malicious SP. We will extend our work in the scenario of colluding SPs in future work. 


\subsection{Fingerprint Robustness Metrics}
\label{sec:robustness_metrics}

The primary goal of a malicious SP is to distort the   fingerprint in $\widetilde{\mathbf{R}}$, thus we consider the following fingerprint  robustness metrics about a pirated  database  $\overline{\mathbf{R}}$ generated by launching attacks on $\widetilde{\mathbf{R}}$.
    
\subsubsection{Number of compromised fingerprint bits}

We formulate the number of compromised fingerprint bits as
    \begin{equation*}
        \mathrm{num_{cmp}} = {\textstyle\sum}_{l = 1}^L \mathbf{1}\{f(l)\neq \overline{f}(l)\},
    \end{equation*}
    where $\mathbf{1}\{\cdot\}$ is the indicator function, $L$ is the length of the fingerprint bit-string, $\overline{f}$ is the extracted  fingerprint bit-string from  $\overline{\mathbf{R}}$, and $f(l)$ (or $\overline{f}(l)$) is the $l$th bit in $f$ (or $\overline{f}$).

\subsubsection{Accusable ranking of a malicious SP}

We quantify the confidence of accusing the correct malicious SP by defining the accusable ranking metric (denoted as $r$) as follows: 
\begin{gather*}
r = 
\begin{cases}
&\mathrm{``uniquely\ accusable"},  \textit{if}\  m_0 >  \sum_{l=1}^L\mathbf{1}\Big\{f_{\mathrm{SP}_i}(l) = \overline{f}(l)\Big\},  \forall \mathrm{SP}_i \in \mathcal{T}  \\
&\mathrm{``top}\ t\ \mathrm{accusable"} , \textit{otherwise}
\end{cases},
\end{gather*}
where   $m_0 = \sum_{l=1}^L\mathbf{1}\{f_{\mathrm{SP_{malicious}}}(l) = \overline{f}(l)\}$ is the number of bit matches between the malicious SP's fingerprint and the extracted fingerprint from the pirated database, and $\mathcal{T}$ is the set of all innocent SPs. Specifically, if the malicious SP has the most bit matches with the extracted fingerprint,   Alice will uniquely accuse it. Otherwise, we compute $t=\frac{\sum_{\mathrm{SP}_i \in \mathcal{T}} \mathbf{1}\Big\{\Big(\sum_{l=1}^L\mathbf{1}\Big\{f_{\mathrm{SP}_i}(l) = \overline{f}(l)\Big\}\Big) \geq m_0\Big\}} {|\mathcal{T}|}\times 100\%$,
which is  the fraction  of innocent SPs having more bit matches with the extracted fingerprint than the malicious SP. For example, if $t = 80\%$, then the malicious SP is only top 80\% accusable, which suggests that Alice will accuse other innocent SPs with high probability. In contrast, if $t = 1\%$, then the malicious SP's accusable ranking increases and makes it among the top 1\% accusable SPs, and Alice will accuse other innocent SPs with low probability. Essentially, a high accusable rank $r$ corresponds to either (i) a ``low $t$" or (ii) the uniquely accusable case. 
{\color{black}As we will show in Section \ref{sec:eva}, for a malicious SP to avoid being accused (i.e., have low accusable rank, or high $t$ value),  it needs to distort more than half of the fingerprint bits. As we will also show via evaluations, the malicious SP can easily achieve this goal if it applies the identified  correlation attacks. Whereas, if it applies the random bit flipping attack, it becomes ``uniquely accusable'' with high probability unless it overdistort the fingerprinted database.}


{\color{black}
According to the vanilla scheme~\cite{li2005fingerprinting}, the probability of extracting a valid fingerprint from a database that does not belong to Alice (i.e., misdiagnosis false hit) is upper bounded by $|\mathrm{SP}|/2^L$, and the probability of extracting an incorrect but valid fingerprint from the fingerprinted database (i.e., misattribution false hit) is upper bounded by $(|\mathrm{SP}|-1)/2^L$. Since these are all negligible probabilities, we do not consider the case in which Alice does not accuse any SP when a copy of her database is leaked in the experiments.
}

\subsection{Utility Metrics}
\label{sec:utility_metrics}


Fingerprinting naturally changes the content of the database, and hence degrades its utility. 
We quantify the utility of a fingerprinted database using the following metrics. 

\subsubsection{Accuracy of $\widetilde{\mathbf{R}}$} 

We quantify the accuracy of $\widetilde{\mathbf{R}}$ as  
\begin{equation*}
      Acc(\widetilde{\mathbf{R}}) = 1-  {\widetilde{\mathbf{R}}\oplus \mathbf{R}}/{M*L},
  \end{equation*}
  where $\oplus$ is the symmetric difference operator that counts   the number of different entries in the fingerprinted and the original databases. $Acc(\widetilde{\mathbf{R}})$ measures the percentage of matched entries between the    fingerprinted and the original databases. 
  
\subsubsection{Preservation of column-wise correlations}

We quantify the preservation of column-wise correlations in the database as
  \begin{gather*}
    \mathrm{P_{col}}(\widetilde{\mathbf{R}}) = 1 - \frac{\sum_{p,q\in \mathcal{F},p\neq q}\sum_{a\in p, b\in q} \mathbf{1}\big\{ \big| \widetilde{J_{p,q}}(a,b)-J_{p,q}(a,b)  \big| \geq\tau_{\mathrm{col}} \big\}}{\sum_{p,q\in \mathcal{F},p\neq q} k_pk_q},
\end{gather*}
  where $p$ and $q$ are two attributes in the attribute set $\mathcal{F}$, $k_p$ (or $k_q$) stands for the number of unique instances of attribute $p$ (or $q$), and $\widetilde{J_{p,q}}(a,b)$ (or $J_{p,q}(a,b)$) is the joint probability that attribute $p$ takes value $a$ and attribute $q$ takes value $b$ in $\widetilde{\mathbf{R}}$ (or $\mathbf{R}$). 
    $\mathrm{P_{col}}$ calculates the fraction of instances of $\big| \widetilde{J_{p,q}}(a,b)-J_{p,q}(a,b)  \big| $ that do not exceed a predetermined threshold $\tau_{\mathrm{col}}$ before and after fingerprinting $\mathbf{R}$. 
  
\subsubsection{Preservation of row-wise correlations}

We quantify the preservation of row-wise correlations in the database as
    \begin{gather*}
        \mathrm{P_{row}}(\widetilde{\mathbf{R}}) = 1 - \frac{\sum_{c=1}^C\sum_{i,j\in\mathrm{comm}_c,i\neq j}\mathbf{1}\big\{\big|\widetilde{s_{i,j}}^{\mathrm{comm}_c}-s_{i,j}^{\mathrm{comm}_c}\big| \geq \tau_{\mathrm{row}}\big\}}{\sum_{c=1}^C n_c(n_c-1)},
    \end{gather*}
where $\mathrm{comm}_c$ represents the set of all individuals in a community $c$,   $\widetilde{s_{i,j}}^{\mathrm{comm}_c}$ (or $s_{i,j}^{\mathrm{comm}_c}$) is the statistical relationship between individual $i$ and $j$ belonging to $\mathrm{comm}_c$ in $\widetilde{\mathbf{R}}$ (or $\mathbf{R}$),   $n_c$ is the number of individuals in $\mathrm{comm}_c$, and $C$ is the number of communities. In essence, $\mathrm{P_{row}}(\widetilde{\mathbf{R}})$ evaluates the fraction of    statistical relationship that has absolute  difference   less than $\tau_{\mathrm{row}}$ in the entire population before and after fingerprinting. 

\subsubsection{Preservation of empirical covariance matrix}

We quantify the preservation of  empirical covariance matrix of the database as
  \begin{equation*}
      \mathrm{P_{cov}} = 1-{||\mathrm{cov}(\widetilde{\mathbf{R}})-\mathrm{cov}(\mathbf{R})||_F}/ {||\mathrm{cov}(\mathbf{R})||_F},
  \end{equation*}
  where $\mathrm{cov}(\mathbf{R}) = \sum_{i=1}^M \boldsymbol{r}_i^T\boldsymbol{r}_i/M$ is the empirical covariance matrix of data records in $\mathbf{R}$. $\mathrm{P_{cov}}$ evaluates the similarity between the covariance matrices of the database before and after fingerprinting. We consider this metric because the fingerprinted database may also be used in data analysis tasks, and  empirical covariance matrix is frequently utilized  to   establish  predictive models, e.g., regression and probability fitting \cite{jolliffe2002springer,browne1992alternative}. Besides,   multivariate data analysis often involves the investigation of  inter-relationships among data records which requires an accurate covariance matrix estimation. 

Note that the utility of the pirated database $\overline{\mathbf{R}}$ generated by the malicious SP can also be quantified using the same metrics, i.e., $Acc(\overline{\mathbf{R}})$, $\mathrm{P_{col}}(\overline{\mathbf{R}})$,    $\mathrm{P_{row}}(\overline{\mathbf{R}})$, and $\mathrm{P_{cov}}(\overline{\mathbf{R}})$. As discussed, a malicious SP can successfully (without being accused) distort the fingerprint easily by over-distorting $\widetilde{\mathbf{R}}$, however, to preserve the data utility, a rational malicious SP will not over-distort a database.


{\color{black}

In addition to the general utility metrics defined above, we will also consider specific statistical utilities, e.g., portion of individuals that have a particular education degree or higher, and the standard deviation of individuals' age distribution. It is noteworthy that  if the general utility metrics are high, it  implicitly suggests high utility for the specific statistical (or other application  related) utilities.}

\section{Identified Correlation Attacks}
\label{sec:corr_attacks}



In the correlation attacks, we assume that the malicious SP has access to both column- and row-wise correlations of Alice's database, which contains (i) correlations between all pairs of attributes (columns), (ii) each individual's membership to the communities and (iii) the statistical relationships of pairs of individuals belonging to  the same community. 
Specifically, the column-wise correlations are characterized by the set of joint distributions among pairs 
of attributes (columns) in the database, i.e., $\mathcal{J} = \{J_{p,q}|p,q\in\mathcal{F},p\neq q\}$. Row-wise correlations, on the other hand, are characterized by the set of statistical relationships between pairs of individuals (rows) in a community. For instance,  $\mathcal{S} = \{s_{ij}^{\mathrm{comm}_c}|i,j\in\mathrm{comm}_c, i\neq j, c\in[1,C]\}$, where $s_{ij}^{\mathrm{comm}_c} = e^{-\mathrm{dist}(\boldsymbol{r}_i,\boldsymbol{r}_j)}$ is the statistical relationship between individuals (data records) $i$ and $j$ in community $\mathrm{comm}_c$ ($\mathrm{dist}(\boldsymbol{r}_i,\boldsymbol{r}_j)$ denotes the Hamming distance between $\boldsymbol{r}_i$ and $\boldsymbol{r}_j$). 
Since the added fingerprint changes some entries in the original  database, which will lead to the change of both joint distributions and statistical relationships, the malicious SP can utilize its auxiliary (publicly available) information about $\mathcal{J}$ and $\mathcal{S}$ to identify the positions of suspicious entries in $\widetilde{\mathbf{R}}$ that are potentially fingerprinted.

\subsection{Column-wise Correlation Attack}
\label{sec:col-wise-attack}


To launch the column-wise correlation attack ($\mathrm{Atk_{col}(\mathcal{J})}$) on  $\widetilde{\mathbf{R}}$, the malicious SP first calculates the empirical joint distributions among pairs of attributes in $\widetilde{\mathbf{R}}$, denoted as $\widetilde{\mathcal{J}}$. 
Then, it compares each joint distribution in $\widetilde{\mathcal{J}}$ (i.e., $\widetilde{J_{p,q}}$) with that in $\mathcal{J}$ (i.e., $J_{p,q}$). For instance, if the absolute  difference of joint probabilities when attribute $p$ takes value $a$ and attribute $q$ takes value $b$ (i.e., $\big|J_{p,q}(a,b)-\widetilde{J_{p,q}}(a,b)\big|$) is higher than a threshold $\tau_{\mathrm{col}}^{\mathrm{Atk}}$, then, the malicious SP queries the row indices 
of the data records in $\widetilde{\mathbf{R}}$ whose attributes $p$ and $q$ take values $a$ and $b$, respectively, and collects the corresponding row indices in a set $\mathcal{I}$, i.e., for the previous example, $\mathcal{I} =  \text{row\ index\ query}\big(\widetilde{\mathbf{R}}.p == a\  \text{and}\ \widetilde{\mathbf{R}}.q == b\big)$  ($\widetilde{\mathbf{R}}.p$ includes   attribute $p$ of all data record in database $\widetilde{\mathbf{R}}$).  
{\color{black}For each row index $i\in\mathcal{I}$, either position $\{i,p\}$ or $\{i,q\}$ (i.e., the row index and attribute tuple) can be   potentially fingerprinted, because they both affect the joint distribution $\widetilde{J_{p,q}}(a,b)$}. Thus, the malicious SP adds each of {\color{black}these tuples}, i.e.,  $\{i,p\}$ and $\{i,q\}, i\in\mathcal{I}$ into   a suspicious position set denoted as $\mathcal{P}$. 

Since a specific suspicious row index $i$ 
can be associated with multiple attributes  in  the suspicious position set $\mathcal{P}$, the suspicious  attribute that is most frequently associated with $i$ is considered to be \textbf{highly suspicious}. 
The malicious SP collects these highly suspicious combinations of row index and attribute in a set $\mathcal{H} = \mathcal{H}\cup\{i,\mathrm{mode}(\mathcal{A}_i)\}$, where $\mathcal{A}_i$ includes all the attributes that are paired with row index $i$ in set $\mathcal{P}$, and $\mathrm{mode}(\mathcal{A}_i)$ returns the most frequent attribute in $\mathcal{A}_i$ (if there is a tie,  the malicious SP randomly chooses one). 
Then, the malicious SP launches the column-wise correlation attack by flipping the LSB of entries in $\widetilde{\mathbf{R}}$ whose positions are in $\mathcal{H}$, i.e., $\widetilde{\mathbf{R}}.(i,p), \forall \{i,p\}\in\mathcal{H}$  {\color{black}($\widetilde{\mathbf{R}}.(i,p)$ represents the value of attribute $p$ for the $i$th data record in   $\widetilde{\mathbf{R}}$)}.

In practice, the malicious SP can launch multiple rounds of $\mathrm{Atk_{col}(\mathcal{J})}$ by iteratively comparing the new joint distributions obtained from the attacked fingerprinted database in the previous round with its prior knowledge $\mathcal{J}$. 
In each round,  a new $\mathcal{H}$ is constructed, but the malicious SP does not flip the highly suspicious positions that have already been flipped in previous rounds. This can be achieved by maintaining and updating an accumulative highly suspicious position set $\mathcal{Z}$. We summarize the steps of conducting $t$ rounds of $\mathrm{Atk_{col}(\mathcal{J})}$ in
Algorithm \ref{Algo:col-wise-attack}. 

\begin{algorithm}
\small
\SetKwInOut{Input}{Input}
\SetKwInOut{Output}{Output}
\Input{Fingerprinted   database $\widetilde{\mathbf{R}}$, malicious SP's prior knowledge on the pairwise joint distributions among attributes, $\mathcal{J}$, and attack rounds $t$.}
\Output{column-wise correlation attacked   DB  $\overline{\mathbf{R}}\Big(\emptyset,\emptyset,\mathrm{Atk_{col}}(\mathcal{J})\Big)$.}

Initialize $cnt = 1$;

Initialize  $\mathcal{Z} = \emptyset$;


\While{$cnt\leq t$}{
Initialize $\mathcal{P} = \emptyset$, $\mathcal{H} = \emptyset$;

Update the empirical joint distributions set $\widetilde{\mathcal{J}}$ using $\widetilde{\mathbf{R}}$;  

      \ForAll{\texttt{$p,q\in\mathcal{F}, p\neq q$}}{
      \ForAll{\texttt{$a\in[0,k_p-1], b\in[0,k_q-1]$}}{\If{$\big|J_{p,q}(a,b)-\widetilde{J_{p,q}}(a,b)\big| \geq \tau_{\mathrm{col}}^{\mathrm{Atk}}$}{
   $\mathcal{I} = \text{row\ index\ query}\big(\widetilde{\mathbf{R}}.p == a\ \text{and}\  \widetilde{\mathbf{R}}.q == b\big)$;  \\
   \ForAll{row index $i\in\mathcal{I}$}{
   
   \If{$\{i,p\}\notin\mathcal{P}$}{$\mathcal{P}=\mathcal{P}\cup \{i,p\}$;}

    \If{$\{i,q\}\notin\mathcal{P}$}{$\mathcal{P}=\mathcal{P}\cup \{i,q\}$;}
    

   }
   }}
      
       }

\ForAll{ row index and attribute tuple  $\{i,p\}\in\mathcal{P}$}{

Collect all attributes   paired with row index $i$ into set $\mathcal{A}_i$

$\mathcal{H} = \mathcal{H} \cup \{i,\rm{mode}(\mathcal{A}_i)\}$;\CommentSty{//the most frequent attribute associate with row index $i$ is recorded in  $\mathcal{H}$.}\\}
     
     \ForAll{highly suspicious row index and attribute tuple $\{i,p\}\in\mathcal{H}$}   {\If{$\{i,p\}\notin\mathcal{Z}$}{Change the LSB of $\widetilde{\mathbf{R}}.(i,p)$;
     
     $\mathcal{Z} = \mathcal{Z}\cup \{i,p\}$;
     

     }}
       
       $cnt = cnt+1;$
       
       
     }

     Return $\overline{\mathbf{R}}\Big(\emptyset,\emptyset,\mathrm{Atk_{col}}(\mathcal{J})\Big) = \widetilde{\mathbf{R}}$;


		\caption{$\mathrm{Atk_{col}}(\mathcal{J})$: Column-wise Correlation     Attack}
	
\label{Algo:col-wise-attack}
\end{algorithm}

{\color{black}
Next, we  show that a malicious SP can   increase its inference power (confidence) about whether  a particular entry in the database is fingerprinted or not by launching $\mathrm{Atk_{col}}(\mathcal{J})$. In Section \ref{sec:eva}, we   experimentally validate this finding using a real-world   database.  
Under  $\mathrm{Atk_{rnd}}$, we denote the malicious SP's confidence that an entry, whose attribute $p$ takes value $a$ in the original database ($\mathbf{R}$), is changed due to the fingerprinting as $\mathrm{Conf_{Atk_{rnd}}}(\frac{1}{\gamma};p,a)$.  
Likewise, under  $\mathrm{Atk_{col}}(\mathcal{J})$, we represent such confidence as $\mathrm{Conf_{Atk_{col}(\mathcal{J})}}(\frac{1}{\gamma};p,a)$. Here, $\gamma\in(0,1)$ is the fingerprinting ratio and we use $\frac{1}{\gamma}$ as the decision parameter to investigate the asymptotic behavior of  the malicious SP's confidence gain, which is defined  as the ratio  $G_{\mathrm{col}}(\frac{1}{\gamma};p,a) = {\mathrm{Conf_{Atk_{col}(\mathcal{J})}}(\frac{1}{\gamma};p,a)}\Big/{\mathrm{Conf_{Atk_{rnd}}}(\frac{1}{\gamma};p,a)}$. Thus, we have the following proposition. 

\begin{prop}\label{prop:atk_col}
By launching $\mathrm{Conf_{Atk_{col}(\mathcal{J})}}$, the malicious SP's   confidence gain
about  an entry, whose attribute $p$ takes value $a$ in $\mathbf{R}$, is fingerprinted can be shown in an  asymptotic manner as
\begin{equation*}
    G_{\mathrm{col}}(\frac{1}{\gamma};p,a) 
    = \Theta\left(\left({1-\prod_{q\in\mathcal{T},q\neq p}\left(\frac{\tau_{\mathrm{col}}^{\mathrm{Atk}}}{\frac{\gamma}{|\mathcal{T}|}{2freq}_{a}^p}\right)^{k_q}}\right)\Bigg/\left({\frac{\gamma}{|\mathcal{T}|}{freq}_{a}^p}\right)\right),
\end{equation*}
where ${freq}_{a}^p$ is the frequency of records with attribute $p$ taking value $a$ in $\mathbf{R}$,   $k_q$ is the number of different values for attribute $q$, and $\Theta(\cdot)$ is the Big-Theta notation. 
\end{prop}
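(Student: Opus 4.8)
The plan is to derive the two confidence quantities appearing in $G_{\mathrm{col}}=\mathrm{Conf_{Atk_{col}(\mathcal{J})}}/\mathrm{Conf_{Atk_{rnd}}}$ separately, and then read off the leading order in $\frac{1}{\gamma}$.

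First I would pin down the denominator. Under $\mathrm{Atk_{rnd}}$ the malicious SP has no correlation knowledge, so its confidence that a given entry with $p=a$ is changed coincides with the baseline marking probability of the vanilla scheme (Algorithm~\ref{algo:vanilla-insert}). Since that scheme marks a record with probability proportional to $\gamma$ and then picks one of the $|\mathcal{T}|$ attributes uniformly, and since the value of attribute $p$ is independent of the marking decision, the expected fraction of records that are simultaneously $(p=a)$ and fingerprinted is $\frac{\gamma}{|\mathcal{T}|}freq_a^p$, which I take as $\mathrm{Conf_{Atk_{rnd}}}(\frac{1}{\gamma};p,a)$.

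Next I would compute the numerator as the probability that $\mathrm{Atk_{col}}(\mathcal{J})$ (Algorithm~\ref{Algo:col-wise-attack}) flags such an entry as highly suspicious. A record with $p=a$ escapes flagging only if, for \emph{every} partner attribute $q\neq p$ and \emph{every} value $b$, the empirical joint $\widetilde{J_{p,q}}(a,b)$ stays within $\tau_{\mathrm{col}}^{\mathrm{Atk}}$ of the reference $J_{p,q}(a,b)$. The crux is to quantify the fingerprint-induced deviation $|J_{p,q}(a,b)-\widetilde{J_{p,q}}(a,b)|$: because a marked entry at attribute $p$ removes $a$-mass from the cell $(a,b)$ while a marked entry at attribute $q$ removes $b$-mass from it, both attributes perturb the same cell, so the absolute deviation ranges over an interval whose width is on the order of $2\frac{\gamma}{|\mathcal{T}|}freq_a^p$. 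Modeling the realized deviation as roughly uniform over this interval, a single $(q,b)$ cell stays under the threshold with probability $\frac{\tau_{\mathrm{col}}^{\mathrm{Atk}}}{\frac{\gamma}{|\mathcal{T}|}2freq_a^p}$. Treating the $k_q$ values of each partner attribute, and the partner attributes themselves, as approximately independent threshold tests, the survival probability factorizes to $\prod_{q\neq p}\big(\frac{\tau_{\mathrm{col}}^{\mathrm{Atk}}}{\frac{\gamma}{|\mathcal{T}|}2freq_a^p}\big)^{k_q}$, so the flagging probability, and hence $\mathrm{Conf_{Atk_{col}(\mathcal{J})}}$, is its complement $1-\prod_{q\neq p}\big(\frac{\tau_{\mathrm{col}}^{\mathrm{Atk}}}{\frac{\gamma}{|\mathcal{T}|}2freq_a^p}\big)^{k_q}$.

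Finally, I would substitute both expressions into the ratio and retain only the leading dependence on $\frac{1}{\gamma}$; the constants hidden by the independence and uniformity approximations are absorbed into $\Theta(\cdot)$, yielding the claimed form. The main obstacle is the deviation-distribution step: I would need to (i) justify the effective width $2\frac{\gamma}{|\mathcal{T}|}freq_a^p$ by carefully tracking both the outflow of mass from the $(a,b)$ cell and any inflow from neighboring values whose LSB flips into $a$ or $b$, and (ii) justify the product-of-survival form, which requires arguing that the threshold-crossing events across distinct $(q,b)$ cells become asymptotically independent as $\frac{1}{\gamma}\to\infty$. Since the statement is only an order-of-growth claim, I expect these approximations to be swallowed by the Big-Theta constant, so the real work is careful frequency bookkeeping rather than a sharp concentration argument.
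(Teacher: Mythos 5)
Your overall architecture coincides with the paper's: the same baseline $\mathrm{Conf_{Atk_{rnd}}}(\frac{1}{\gamma};p,a)=\frac{\gamma}{|\mathcal{T}|}freq_a^p$ in the denominator, the same complement-of-a-survival-product form for $\mathrm{Conf_{Atk_{col}(\mathcal{J})}}$, and the same uniform model for the fingerprint-induced deviation of each joint-probability cell. The genuine gap sits exactly at the step you flag as your ``main obstacle,'' and the repair you propose would not work. The deviation of a single cell $(a,b)$ is governed by that cell's own population: the mass that can flow out is of order $\frac{\gamma}{|\mathcal{T}|}freq_{a,b}^{p,q}$ (flips of $p$ or of $q$ among records in the cell, which is where the factor $2$ comes from), and any inflow comes from LSB-adjacent cells, again a cell-level quantity. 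So the correct support of the uniform deviation is $\bigl[0,\frac{\gamma}{|\mathcal{T}|}2freq_{a,b}^{p,q}\bigr]$, i.e., it involves the \emph{joint} cell frequency, not the marginal $freq_a^p$ that you posit; no amount of outflow/inflow bookkeeping inflates a single cell's width to $freq_a^p$, because $freq_a^p$ only appears after summing over all $b$.

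The missing idea is how the paper passes from the cell-level survival product $\prod_{b\in[0,k_q-1]}\tau_{\mathrm{col}}^{\mathrm{Atk}}\big/\bigl(\tfrac{\gamma}{|\mathcal{T}|}2freq_{a,b}^{p,q}\bigr)$ to the marginal-frequency form in the statement: it applies the arithmetic--geometric mean inequality together with the marginalization identity $\sum_{b\in[0,k_q-1]}freq_{a,b}^{p,q}=freq_a^p$, which controls $\prod_b freq_{a,b}^{p,q}$ by $\bigl(freq_a^p/k_q\bigr)^{k_q}$ and lets the resulting $k_q^{k_q}$ factor be absorbed into $\Theta(\cdot)$. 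This step cannot be replaced by the hope that Big-Theta swallows the discrepancy: the ratio between the true cell-level product and your surrogate $\bigl(\tau_{\mathrm{col}}^{\mathrm{Atk}}\big/(\tfrac{\gamma}{|\mathcal{T}|}2freq_a^p)\bigr)^{k_q}$ equals $(freq_a^p)^{k_q}\big/\prod_b freq_{a,b}^{p,q}$, which depends on how skewed the cell masses are and is unbounded (it is at least $k_q^{k_q}$ and blows up when one cell carries almost all the mass of the marginal). In other words, the two expressions are related by a one-sided inequality, not a constant-factor equivalence, so once you correct the per-cell width your derivation stalls precisely where the paper invokes AM--GM.
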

\begin{proof}[Proof Sketch] 
For the vanilla fingerprinting scheme, we have $\mathrm{Conf_{Atk_{rnd}}}(p,a)=\frac{\gamma}{|\mathcal{T}|}{freq}_{a}^p$. When launching the $\mathrm{Atk_{col}}(\mathcal{J})$, the malicious SP will add the corresponding suspicious row index and attribute tuple in $\mathcal{P}$ if $\big|J_{p,q}(a,b)-\widetilde{J_{p,q}}(a,b)\big| \geq \tau_{\mathrm{col}}^{\mathrm{Atk}}$. Thus, we have   $    \mathrm{Conf_{Atk_{col}(\mathcal{J})}}(p,a) = 1-
    \prod_{q\in\mathcal{T},q\neq p}\prod_{b\in[0,k_q-1]} \Pr\Big(\big|J_{p,q}(a,b)-\widetilde{J_{p,q}}(a,b)\big| < \tau_{\mathrm{col}}^{\mathrm{Atk}}\Big)$. Since the inserted fingerprint will cause  $J_{p,q}(a,b)-\widetilde{J_{p,q}}(a,b)$ vary in the range of $\Big[-\frac{\gamma}{|\mathcal{T}|}2freq_{a,b}^{p,q},\frac{\gamma}{|\mathcal{T}|}2 freq_{a,b}^{p,q}\Big]$, where $freq_{a,b}^{p,q}$ is the frequency of entries whose attributes $p$ and $q$ take values $a$ and $b$ in $\mathbf{R}$. 
    Then, $|J_{p,q}(a,b)-\widetilde{J_{p,q}}(a,b)|$ can be shown as  a random variable attributed to an uniform distribution in the support of $\Big[0,\frac{\gamma}{|\mathcal{T}|}2 freq_{a,b}^{p,q}\Big]$, which leads to  $\mathrm{Conf_{Atk_{col}(\mathcal{J})}}(p,a) = 1- \prod_{q\in\mathcal{T},q\neq p}\prod_{b\in[0,k_q-1]} {\tau_{\mathrm{col}}^{\mathrm{Atk}}}\Big/\Big({\frac{\gamma}{|\mathcal{T}|}{2freq}_{a}^p} \Big)$. By applying   arithmetic-geometric mean inequality along with the fact   $\sum_{b\in[0,k_q-1]}freq_{a,b}^{p,q}  = freq_{a}^{p}, \forall q\neq p$, we can complete the proof.
\end{proof}
\begin{remark}
We aim at presenting a generic confidence gain achieved from $\mathrm{Atk_{col}}(\mathcal{J})$, thus we consider the potential fingerprinted entries in the suspicious set $\mathcal{P}$ instead of the highly suspicious set $\mathcal{H}$. In practice, the generation process of  $\mathcal{H}$ from $\mathcal{P}$ heavily depends on the data distribution in the considered databases. 
\end{remark}

}

\subsection{Row-wise Correlation Attack}
\label{sec:row-wise-atack}


Since the malicious SP has access to both individuals' memberships to   communities and row-wise correlations, i.e.,  $\mathcal{S}$, 
after receiving the fingerprinted database, it can compute a new set of statistical relationships 
among pairs of individuals in each of the communities using $\widetilde{\mathbf{R}}$, 
i.e.,  $\widetilde{\mathcal{S}} =  \{\widetilde{s_{ij}}^{\mathrm{comm}_c}|i,j\in\mathrm{comm}_c, i\neq j, c\in[1,C]\}$, where $\widetilde{s_{ij}}^{\mathrm{comm}_c} = e^{-\mathrm{dist}(\widetilde{\boldsymbol{r}_i},\widetilde{\boldsymbol{r}_j})}$ is the statistical relationship between the $i$th and $j$th data records (i.e., $\widetilde{\boldsymbol{r}_i}$ and $\widetilde{\boldsymbol{r}_j}$) in $\widetilde{\mathbf{R}}$. 
Then, to conduct $\mathrm{Atk_{row}}(\mathcal{S})$, the malicious SP flips the LSBs of all attributes of a   data record $\boldsymbol{r}_i$, 
if the cumulative absolute difference of its  statistical relationships with respect to other records in the same community exceeds a predetermined threshold $\tau_{\mathrm{row}}^{\mathrm{Atk}}$ after fingerprinting, i.e., $\sum_{j\neq i}^{n_c} \   \big|s_{ij}^{\mathrm{comm}_c}-\widetilde{s_{ij}}^{\mathrm{comm}_c}\big| \geq \tau_{\mathrm{row}}^{\mathrm{Atk}}, i,j\in\mathrm{comm}_c$.   {\color{black} The rationale behind this is because the row-wise statistical information $\mathcal{S}$ is calculated using  the entire data records, instead of individual entries between the rows. Although, this represents the strongest row-wise attack as it changes all the entries of a given data record, in practice, $\mathrm{Atk_{row}(\mathcal{S})}$ changes only a limited number of data records, as will be shown in Section \ref{sec: int_attack_census}.} We summarize the steps to launch $\mathrm{Atk_{row}}(\mathcal{S})$ on $\widetilde{\mathbf{R}}$ in Algorithm \ref{algo:row-wise-attack}. 

\begin{algorithm}
\small
\SetKwInOut{Input}{Input}
\SetKwInOut{Output}{Output}
\Input{Fingerprinted database, $\widetilde{\mathbf{R}}$, malicious SP's prior knowledge on the row-wise correlations  $\mathcal{S}$ and individuals' affiliation to the $C$ communities.}
\Output{$\overline{\mathbf{R}}\Big(\emptyset,\mathrm{Atk_{row}}(\mathcal{S}),\emptyset,\Big)$.}


Obtain  the new set of  pairwise statistical relationships among individuals in each community   from  $\widetilde{\mathbf{R}}$, i.e., $\widetilde{\mathcal{S}}$;

\ForAll{$\mathrm{comm}_c, c\in[1,C]$}{\ForAll{individual $i\in \mathrm{comm}_c$}{\If{$\sum_{j\neq i}^{n_c} \   \big|s_{ij}^{\mathrm{comm}_c}-\widetilde{s_{ij}}^{\mathrm{comm}_c}\big| \geq \tau_{\mathrm{row}}^{\mathrm{Atk}}$}{Flip the LSBs of all attributes of $\boldsymbol{r}_i$ in $\widetilde{\mathbf{R}}$;}

}}

 Return $\overline{\mathbf{R}}\Big(\emptyset,\mathrm{Atk_{row}}(\mathcal{S}),\emptyset,\Big) = \widetilde{\mathbf{R}}$.

		\caption{$\mathrm{Atk_{row}}(\mathcal{S})$: Row-wise correlation  attack}
\label{algo:row-wise-attack}
\end{algorithm}


We analyze the impact of $\mathrm{Atk_{row}}(\mathcal{S})$ by denoting the malicious SP's confidence that an entry ($\boldsymbol{r}_i$) is fingerprinted as $\mathrm{Conf}_{\mathrm{Atk_{rnd}}}(\frac{1}{\gamma};\boldsymbol{r}_i)$ and $\mathrm{Conf}_{\mathrm{Atk_{row}}(\mathcal{S})}(\frac{1}{\gamma};\boldsymbol{r}_i)$, under $\mathrm{Atk_{rnd}}$ and $\mathrm{Atk_{row}}(\mathcal{S})$, respectively. Then, 
the confidence  gain of the malicious SP is  $G_{\mathrm{row}}(\frac{1}{\gamma};\boldsymbol{r}_i) = \frac{\mathrm{Conf}_{\mathrm{Atk_{row}}(\mathcal{S})}(\frac{1}{\gamma};\boldsymbol{r}_i)}{\mathrm{Conf}_{\mathrm{Atk_{rnd}}}(\frac{1}{\gamma};\boldsymbol{r}_i)}$, which is calculated in the following proposition.

\begin{prop}\label{prop:atk-row}
By launching $\mathrm{Conf_{Atk_{row}(\mathcal{S})}}$, the malicious SP's maximum confidence gain  about  an entry   in $\mathbf{R}$ is fingerprinted can be shown asymptotically as  
\begin{equation*}
    G_{\mathrm{row}}(\frac{1}{\gamma};\boldsymbol{r}_i)  = \Theta\vast( \Bigg(1-\sum_{j=0}^{\floor*{\tau_{\mathrm{row}}^{\mathrm{Atk}}}}{n_c-1 \choose j}(2\gamma-\gamma^2)^j(1-\gamma)^{2(n_c-1-j)}\Bigg)  \Big/ \gamma \vast).
\end{equation*}
\end{prop}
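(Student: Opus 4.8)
The plan is to follow the same two-step template as the proof of Proposition~\ref{prop:atk_col}: first pin down the baseline random-attack confidence $\mathrm{Conf_{Atk_{rnd}}}(\frac{1}{\gamma};\boldsymbol{r}_i)$, then compute the confidence $\mathrm{Conf_{Atk_{row}(\mathcal{S})}}(\frac{1}{\gamma};\boldsymbol{r}_i)$ that the row-wise attack flags $\boldsymbol{r}_i$, and finally take their ratio. For the baseline, observe that $\mathrm{Atk_{rnd}}$ carries no side information about which records were marked, so the malicious SP's confidence that $\boldsymbol{r}_i$ is fingerprinted is just the prior marking probability of the vanilla scheme; since a $\gamma$-fraction of records are selected for marking, $\mathrm{Conf_{Atk_{rnd}}}(\frac{1}{\gamma};\boldsymbol{r}_i)=\Theta(\gamma)$, which supplies the denominator of $G_{\mathrm{row}}$.

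For the numerator, I would analyze the flagging rule $\sum_{j\neq i}|s_{ij}^{\mathrm{comm}_c}-\widetilde{s_{ij}}^{\mathrm{comm}_c}|\ge\tau_{\mathrm{row}}^{\mathrm{Atk}}$ pair by pair. Because $s_{ij}^{\mathrm{comm}_c}=e^{-\mathrm{dist}(\boldsymbol{r}_i,\boldsymbol{r}_j)}$ depends only on the Hamming distance between the two records, the relationship for a neighbour $j$ is altered precisely when fingerprinting changes an entry of $\boldsymbol{r}_i$ or of $\boldsymbol{r}_j$, i.e., when at least one of the two records is marked. As the vanilla scheme marks records independently with probability $\gamma$, this happens with probability $1-(1-\gamma)^2=2\gamma-\gamma^2$ for each of the $n_c-1$ other members of the community.

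Next I would convert the cumulative-difference test into a binomial tail. Treating the $n_c-1$ per-pair change events as independent Bernoulli trials with success probability $2\gamma-\gamma^2$, the number $N_i$ of altered relationships satisfies $N_i\sim\mathrm{Bin}(n_c-1,\,2\gamma-\gamma^2)$, and because each altered relationship contributes a bounded positive amount to the cumulative sum, the test $\sum_{j\neq i}|s_{ij}^{\mathrm{comm}_c}-\widetilde{s_{ij}}^{\mathrm{comm}_c}|\ge\tau_{\mathrm{row}}^{\mathrm{Atk}}$ becomes, asymptotically, the count condition $N_i>\lfloor\tau_{\mathrm{row}}^{\mathrm{Atk}}\rfloor$. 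Using $(2\gamma-\gamma^2)+(1-\gamma)^2=1$, this yields
\[
\mathrm{Conf_{Atk_{row}(\mathcal{S})}}(\tfrac{1}{\gamma};\boldsymbol{r}_i)=\Pr\!\big(N_i>\lfloor\tau_{\mathrm{row}}^{\mathrm{Atk}}\rfloor\big)=1-\sum_{j=0}^{\lfloor\tau_{\mathrm{row}}^{\mathrm{Atk}}\rfloor}\binom{n_c-1}{j}(2\gamma-\gamma^2)^j(1-\gamma)^{2(n_c-1-j)}.
\]
Dividing this confidence by the baseline order $\Theta(\gamma)$ and absorbing the constant into the Big-Theta produces exactly the claimed expression for $G_{\mathrm{row}}(\frac{1}{\gamma};\boldsymbol{r}_i)$.

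The main obstacle is justifying the reduction in the previous step, which hides two approximations. First, the per-pair change indicators are not truly independent: they share the common event that $\boldsymbol{r}_i$ is marked, which would force all $n_c-1$ relationships to change simultaneously. I would argue that this positive correlation leaves the mean count $(n_c-1)(2\gamma-\gamma^2)$ unchanged and perturbs only lower-order terms absorbed by $\Theta(\cdot)$, so the independent-binomial model captures the correct asymptotic order. Second, replacing the magnitude-weighted sum by an unweighted count requires that each altered relationship move the sum by at least a fixed quantum; the phrase ``maximum confidence gain'' signals that we take precisely this tight regime, so the count condition $N_i>\lfloor\tau_{\mathrm{row}}^{\mathrm{Atk}}\rfloor$ upper-bounds the flagging probability and renders the stated $\Theta$ bound the extremal case.
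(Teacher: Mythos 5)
Your proposal is correct and takes essentially the same route as the paper's proof sketch: baseline confidence $\gamma$ for the random attack, per-pair alteration probability $2\gamma-\gamma^2$ (at least one of the two records marked), a binomial tail for the number of altered relationships truncated at $\lfloor\tau_{\mathrm{row}}^{\mathrm{Atk}}\rfloor$, and the ratio to $\gamma$ yielding the stated $\Theta$ expression. The only difference is in how the weighted sum is reduced to a count: the paper replaces $|e^{-\mathrm{dist}(\boldsymbol{r}_i,\boldsymbol{r}_j)}-e^{-\mathrm{dist}(\widetilde{\boldsymbol{r}_i},\widetilde{\boldsymbol{r}_j})}|$ by $|\mathrm{dist}(\boldsymbol{r}_i,\boldsymbol{r}_j)-\mathrm{dist}(\widetilde{\boldsymbol{r}_i},\widetilde{\boldsymbol{r}_j})|$ via a Taylor approximation under the small-within-community-distance assumption and then discards the probability-$\gamma^2$ value-two case, whereas you use a bounded-contribution argument tied to the word ``maximum''; your explicit acknowledgement of the dependence among the per-pair indicators (through the shared event that $\boldsymbol{r}_i$ itself is marked) is a point the paper's sketch passes over silently.
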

\begin{proof}[Proof Sketch]
Clearly, $\mathrm{Conf}_{\mathrm{Atk_{rnd}}}(\boldsymbol{r}_i) = \gamma$. 
According to Algorithm \ref{algo:row-wise-attack},  $\mathrm{Conf}_{\mathrm{Atk_{row}}(\mathcal{S})}(\frac{1}{\gamma};\boldsymbol{r}_i) = \Pr(\sum_{j\neq i}^{n_c} \   \big|e^{-\mathrm{dist}(\boldsymbol{r}_i,\boldsymbol{r}_j)}-e^{-\mathrm{dist}(\widetilde{\boldsymbol{r}_i},\widetilde{\boldsymbol{r}_j})}\big| \geq \tau_{\mathrm{row}}^{\mathrm{Atk}})\stackrel{*}\approx \Pr(\sum_{j\neq i}^{n_c} \   \big|\mathrm{dist}(\boldsymbol{r}_i,\boldsymbol{r}_j)-\mathrm{dist}(\widetilde{\boldsymbol{r}_i},\widetilde{\boldsymbol{r}_j})\big| \geq \tau_{\mathrm{row}}^{\mathrm{Atk}})$, where $*$ is due to the Taylor approximation and the assumption that the distance between individuals in the same community is   small. Then,  $|\mathrm{dist}(\boldsymbol{r}_i,\boldsymbol{r}_j)-\mathrm{dist}(\widetilde{\boldsymbol{r}_i},\widetilde{\boldsymbol{r}_j})\big|$ can be shown as a Bernoulli  random variable, which is $0$ with probability $(1-\gamma)^2$, and is nonzero with probability $2\gamma-\gamma^2$. Since the summation of Bernoulli random variable is attributed to binomial distribution, we can finish the proof.
\end{proof}
\begin{remark}
In the above analysis, we   ignored the scenario where $|\mathrm{dist}(\boldsymbol{r}_i,\boldsymbol{r}_j)-\mathrm{dist}(\widetilde{\boldsymbol{r}_i},\widetilde{\boldsymbol{r}_j})\big|$ is $2$ with probability $\gamma^2$ to avoid extra heavy notations. In the experiments, we set $\gamma=1/35$, thus, $\gamma^2$ is negligible. 
\end{remark}





\subsection{Integrated Correlation Attack}
\label{sec:effect_corr_attack}

In practice, the malicious SP will apply $\mathrm{Atk_{row}}(\mathcal{S})$ followed by $\mathrm{Atk_{col}}(\mathcal{J})$ if it launches the integrated correlation attack. {\color{black}This is because  (i) $\mathrm{Atk_{row}}(\mathcal{S})$ is computationally light and  modifies significantly  less entries in $\widetilde{\mathbf{R}}(\mathrm{FP},\emptyset,\emptyset)$ compared to $\mathrm{Atk_{col}}(\mathcal{J})$ (as we will show in Section~\ref{sec: int_attack_census}). (ii) If $\mathrm{Atk_{col}}(\mathcal{J})$ is applied first, it will change   the row-wise correlations ($\mathrm{P_{row}}$)  significantly, yet, if $\mathrm{Atk_{row}}(\mathcal{S})$ is applied first, it only has a small impact on the column-wise correlations $\mathrm{P_{col}}$ (as we will also show in Section~\ref{sec: int_attack_census}).}  Algorithm \ref{algo:int-attack}    summarizes the major steps of this integrated attack. Note that, in practice,  there is no  minimum distribution difference requirement  to perform the proposed attacks, because a malicious SP can always reduce the value of  $\tau_{\mathrm{col}}^{\mathrm{Atk}}$ and $\tau_{\mathrm{row}}^{\mathrm{Atk}}$ to obtain more potentially fingerprinted entries.

\begin{algorithm}
\small
\SetKwInOut{Input}{Input}
\SetKwInOut{Output}{Output}
\Input{Fingerprinted database, $\widetilde{\mathbf{R}}$, malicious SP's prior knowledge on the row-wise correlations  $\mathcal{S}$, individuals' affiliation to the $C$ communities, and column-wise correlations  $\mathcal{J}$.}
\Output{$\overline{\mathbf{R}}\Big(\emptyset,\mathrm{Atk_{row}}(\mathcal{S}),\mathrm{Atk_{col}}(\mathcal{J})\Big)$.}


Launch row-wise correlation attack   $\mathrm{Atk_{row}}(\mathcal{S})$ on $\widetilde{\mathbf{R}}$ using Algorithm \ref{algo:row-wise-attack}, and obtain $\overline{\mathbf{R}}\Big(\emptyset,\mathrm{Atk_{row}}(\mathcal{S}),\emptyset\Big)$;

Launch  column-wise correlation attack  $\mathrm{Atk_{col}}(\mathcal{J})$ on $\overline{\mathbf{R}}\Big(\emptyset,\mathrm{Atk_{row}}(\mathcal{S}),\emptyset\Big)$ using Algorithm \ref{Algo:col-wise-attack}, obtain and return   $\overline{\mathbf{R}}\Big(\emptyset,\mathrm{Atk_{row}}(\mathcal{S}),\mathrm{Atk_{col}}(\mathcal{J})\Big)$.

\caption{Integrated correlation attack}
\label{algo:int-attack}
\end{algorithm}



By taking advantage of the correlation models, the identified attacks (in Sections \ref{sec:col-wise-attack} and \ref{sec:row-wise-atack}) are   more powerful than the traditional random bit flipping attack  (in Section \ref{sec:threat_models}). As we will show in Section~\ref{sec: int_attack_census}, to effectively distort the added fingerprint and cause Alice to accuse innocent SPs with high probability, a malicious SP only needs to change a small fraction of entries in the fingerprinted database if it conducts the correlation attacks  on $\widetilde{\mathbf{R}}(\mathrm{FP},\emptyset,\emptyset)$. 
In contrast, to achieve a similar attack performance, the random bit flipping attack   needs to change  more than $80\%$ of the entries in $\widetilde{\mathbf{R}}(\mathrm{FP},\emptyset,\emptyset)$, which results in a significant loss in     database utility. Thus, the   correlation attacks not only distort the inserted fingerprint but they also maintain a high utility for the pirated database.  

Due to the identified vulnerability of existing fingerprinting schemes for relations against correlation attacks, it is critical to develop defense mechanisms that can mitigate these attacks. In the next section, we discuss how to develop robust fingerprinting techniques against both column- and row-wise correlation attacks.

\section{Robust Fingerprinting Against Identified Correlation Attacks}
\label{sec:robust_fp}

Now, we propose robust fingerprinting schemes against the identified correlation attacks that can serve as post-processing steps for any off-the-shelf (vanilla) fingerprinting schemes.  
To provide robustness against column- and row-wise correlation attack, i.e.,  $\mathrm{Atk_{col}(\mathcal{J})}$ and   $\mathrm{Atk_{row}(\mathcal{S})}$, the database owner (Alice) utilizes her prior knowledge $\mathcal{J}'$ and $\mathcal{S}'$ as the reference column-wise joint distributions and statistical relationships, respectively. We will show that to implement the proposed mitigation techniques, Alice needs to change only a few entries (e.g., less than $3\%$) in $\widetilde{\mathbf{R}}(\mathrm{FP},\emptyset,\emptyset)$, 
such that the post-processed fingerprinted database has    column-wise correlation    close to $\mathcal{J}'$ and row-wise correlation     far from  $\mathcal{S}'$.


\subsection{Robust Fingerprinting Against Column-wise Correlation Attack}
\label{sec:col_defense}


\subsubsection{Mitigation via mass transportation} 
To make a vanilla fingerprinting scheme robust against column-wise correlation attack, the main goal of the proposed technique   $\mathrm{Dfs_{col}}(\mathcal{J}')$ is to transform  $\widetilde{\mathbf{R}}(\mathrm{FP},\emptyset,\emptyset)$ to have column-wise joint distributions close to  the reference joint distributions in $\mathcal{J}'$. We develop  $\mathrm{Dfs_{col}}(\mathcal{J}')$ using ``optimal transportation''~\cite{courty2016optimal}, which moves the probability  mass of the marginal distribution of each attribute in $\widetilde{\mathbf{R}}(\mathrm{FP},\emptyset,\emptyset)$ to resemble the distribution obtained from the marginalization of each reference joint distribution in $\mathcal{J}'$. Then, the optimal transportation plan is used to change the entries in each attribute of  $\widetilde{\mathbf{R}}(\mathrm{FP},\emptyset,\emptyset)$ to obtain $\widetilde{\mathbf{R}}\big(\mathrm{FP},\emptyset,\mathrm{Dfs_{col}}(\mathcal{J}')\big)$. 
While doing this, the new empirical joint distributions calculated from     $\widetilde{\mathbf{R}}\big(\mathrm{FP},\emptyset,\mathrm{Dfs_{col}}(\mathcal{J}')\big)$ also become close to the ones in  $\mathcal{J}'$. 


In particular, for a specific attribute (column) $p$, we denote its marginal distribution obtained from the (vanilla) fingerprinted database as $\Pr(C_{\widetilde{p}})$, and that obtained from the  marginalization of a reference $J'_{p,q}$ distribution in $\mathcal{J}'$ as $\Pr(C_{p'}) = J'_{p,q}\mathbf{1}^T$ ($q$ can be any attribute that is different from $p$, because the marginalization with respect to $p$ using different $J'_{p,q}$ will lead to the identical marginal distribution of $p$).
To move the mass of $\Pr(C_{\widetilde{p}})$ to resemble $\Pr(C_{p'})$, we need to find another joint distribution (i.e., the mass transportation plan)  $G_{\widetilde{p},p'}\in\R^{k_p\times k_p}$ ($k_p$ is the number of possible values that attribute $p$ can take), whose marginal distributions are identical to $\Pr(C_{\widetilde{p}})$ and $\Pr(C_{p'})$. 
Let $a$ and $b$ be two distinct values that attribute $p$ can take ($a,b\in[0,k_p-1]$). Then,  $G_{\widetilde{p},p'}(a,b)$ indicates that the database owner should change $G_{\widetilde{p},p'}(a,b)$ percentage of entries in $\widetilde{\mathbf{R}}(\mathrm{FP},\emptyset,\emptyset)$ whose attribute $p$ takes value $a$ (i.e., $p=a$) to value $b$ (i.e., change them to make $p=b$), so as to  make $\Pr(C_{\widetilde{p}})$ close to $\Pr(C_{p'})$. In practice, such a transportation plan can be obtained by solving a regularized optimal transportation problem, i.e., the entropy regularized Sinkhorn distance minimization  \cite{cuturi2013sinkhorn} as follows:  
\begin{equation}
\label{ot_defense}
\begin{aligned}
&d\Big(\Pr(C_{\widetilde{p}}),\Pr(C_{p'}),\lambda_p\Big) \\
&= \min_{G_{\widetilde{p},p'}\in\mathcal{G}\big(\Pr(C_{\widetilde{p}}),\Pr(C_{p'})\big)}<G_{\widetilde{p},p'},\Theta_{\widetilde{p},p'}>_F-\frac{H(G_{\widetilde{p},p'})}{\lambda_{p}},
\end{aligned}
\end{equation}
where     $\mathcal{G}\big(\Pr(C_{\widetilde{p}}),\Pr(C_{p'})\big) = \big\{G\in\R^{k_p\times k_p}\big|G\mathbf{1} = \Pr(C_{\widetilde{p}}),G^T\mathbf{1}  = \Pr(C_{p'})\big\}$ is the set of all  joint probability distributions whose marginal distributions are the probability mass functions   of  $\Pr(C_{\widetilde{p}})$ and $\Pr(C_{p'})$. $<\cdot,\cdot>_F$ denotes the Frobenius inner product 
of two matrices with the same size. Also, $\Theta_{\widetilde{p},p'}$ is the transportation cost matrix and $\Theta_{\widetilde{p},p'}(a,b)>0$ represents the cost to move a unit percentage of mass from $\Pr(C_{\widetilde{p}} = a)$ to $\Pr(C_{\widetilde{p}} = b)$. Finally, $H(G_{\widetilde{p},p'}) = -<G_{\widetilde{p},p'},\log G_{\widetilde{p},p'}>_F$ calculates  the information entropy of $G_{\widetilde{p},p'}$ and $\lambda_p>0$ is a tuning parameter. In practice, (\ref{ot_defense}) can be solved by iteratively rescaling rows and columns of the initialized $G_{\widetilde{p},p'}$ to have desired marginal distributions.  The obtained $G_{\widetilde{p},p'}$ is more heterogeneous for larger values of $\lambda_p$. This suggests that the transportation plan 
tends to move the mass of 
$\Pr(C_{\widetilde{p}} = a)$ to the adjacent instances, i.e, $b=a-1$ or $b=a+1$. In contrast, the obtained $G_{\widetilde{p},p'}$ is more homogeneous for smaller values of $\lambda_p$, which suggests that the transportation plan tends to move the mass of 
$\Pr(C_{\widetilde{p}} = a)$ to all other instances.  {\color{black}A  homogeneous plan makes   $\Pr(C_{\widetilde{p}})$ much closer to $\Pr(C_{p'})$ after the mass transportation, but it causes more data entries to be changed, and results in a higher decrease in the database utility. On the other hand, a heterogeneous plan changes less data entries by tolerating a larger difference between  $\Pr(C_{\widetilde{p}})$ and $\Pr(C_{p'})$ after the mass transportation. In the evaluation (in Section~\ref{sec:eva}), we will try different values of  $\lambda_p$ to strike a balance between the mitigation performance and data utility.}


\subsubsection{A toy example on mass transportation}
\label{sec:toy_example}
To illustrate $\mathrm{Dfs_{col}}(\mathcal{J}')$ via mass transportation of $\Pr(C_{\widetilde{p}})$ to resemble $\Pr(C_{p'})$, we use a pair of discrete probability distributions shown in Figure \ref{fig:OT_toy_example}(a) as an example, and demonstrate the transportation plans obtained by solving (\ref{ot_defense}) when $\lambda_p=500$ and $\lambda_p=50$ in Figures \ref{fig:OT_toy_example}(b) and (c), respectively.
In Figure \ref{fig:OT_toy_example}(b), we have a heterogeneous  $G_{\widetilde{p},p'}$, which often  moves the mass to adjacent instances, e.g., the mass of $\Pr(C_{\widetilde{p}}=0)$ is divided into 3 parts and a larger portion of mass is moved to $\Pr(C_{\widetilde{p}}=1)$.    $G_{\widetilde{p},p'}(0,1) = 0.157$,  thus $0.157$ mass  of $\Pr(C_{\widetilde{p}}=0)$ is moved to $\Pr(C_{\widetilde{p}}=1)$.  In Figure~\ref{fig:OT_toy_example}(c), we obtain a homogeneous $G_{\widetilde{p},p'}$, which distributes the mass to many other instances. For example, the mass of $\Pr(C_{\widetilde{p}}=0)$ is divided into 5 parts and 4 of them are  moved to $\Pr(C_{\widetilde{p}}=1)$, $\Pr(C_{\widetilde{p}}=2)$, $\Pr(C_{\widetilde{p}}=3)$, and $\Pr(C_{\widetilde{p}}=5)$.

\begin{figure}[htb]
  \begin{center}
     \includegraphics[width= 0.7\columnwidth]{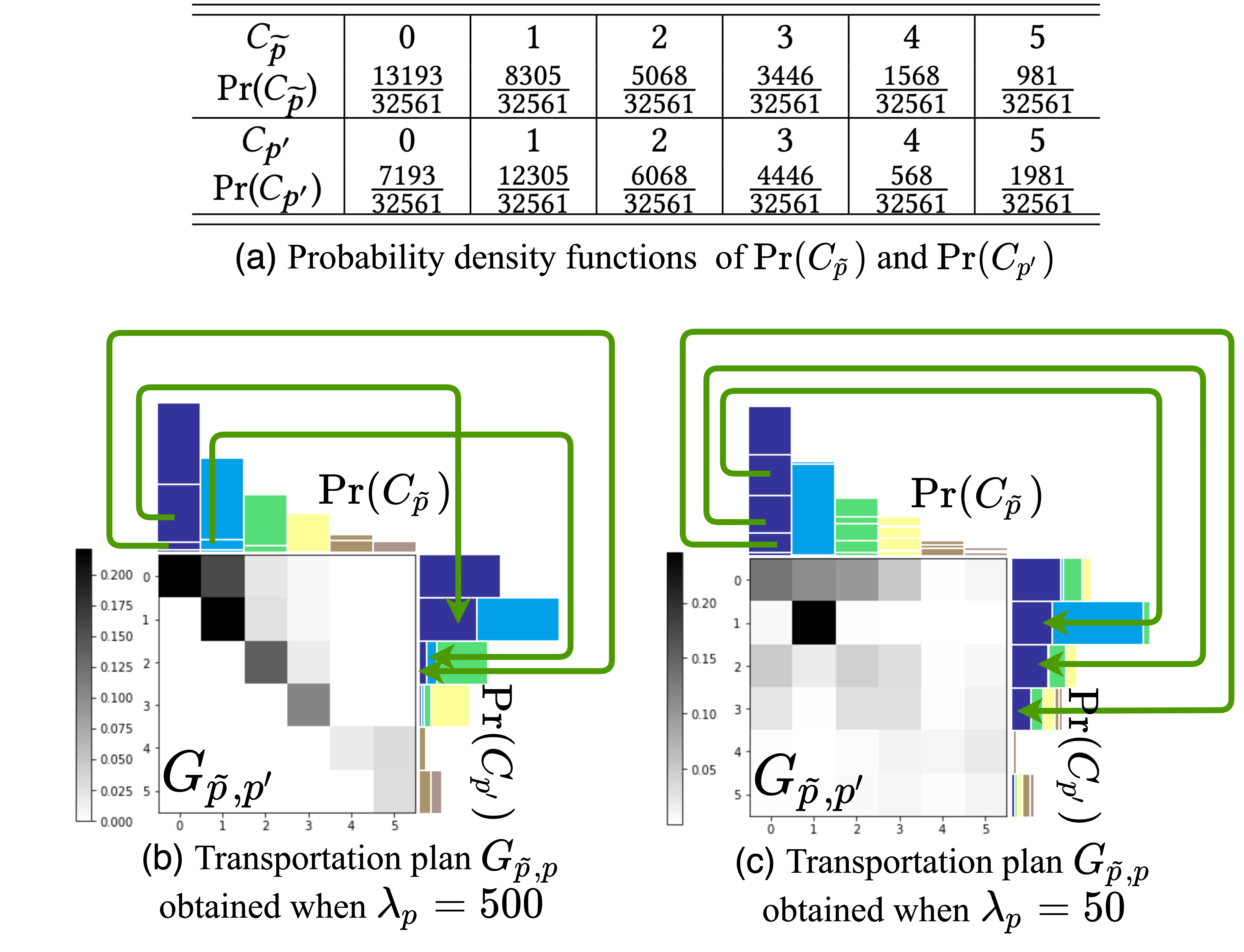}
      \end{center}
  \caption{\label{fig:OT_toy_example} Visualization of mass transportation plans obtained by solving (\ref{ot_defense}) using different $\lambda_p$ values to move   mass of $\Pr(C_{\widetilde{p}})$ to resemble $\Pr(C_{p'})$. (a) example discrete probability distributions. (b) if $\lambda_p=500$, we achieve a heterogeneous plan, which tolerates more   difference between  $\Pr(C_{\widetilde{p}})$ and $\Pr(C_{p'})$ after the mass transportation. 
  (c) if $\lambda_p=50$, we achieve a homogeneous plan. which makes    $\Pr(C_{\widetilde{p}})$ more closer to $\Pr(C_{p'})$ after the mass transportation.
}
\end{figure}

\subsubsection{Algorithm description} In the following, we formally describe the procedure of $\mathrm{Dfs_{col}}(\mathcal{J}')$. After Alice generates $\widetilde{\mathbf{R}}(\mathrm{FP},\emptyset,\emptyset)$ using the vanilla fingerprinting scheme, she   evaluates the new  joint distributions of all pairs of attributes, i.e., $\widetilde{J_{p,q}}, p,q\in\mathcal{F},p\neq q$, and compares them  with the reference joint distributions   $\mathcal{J}_{p,q}',  p,q\in\mathcal{F},p\neq q$.
If the discrepancy between a particular pair of joint distributions exceeds a predetermined threshold, i.e., 
 $||\widetilde{J_{p,q}}-J'_{p,q}||_F\geq \tau_{\mathrm{col}}^{\mathrm{Dfs}}$, Alice records both attributes $p$ and $q$ in a set $\mathcal{Q}$. 
For all the attributes in $\mathcal{Q}$, Alice obtains $\Pr(C_{\widetilde{p}})$ from $\widetilde{\mathbf{R}}(\mathrm{FP},\emptyset,\emptyset).p$ 
and calculates $\Pr(C_{p'}) = J'_{p,q}\mathbf{1}^T$. Next, she gets the optimal transportation plan for attribute $p$  
by solving (\ref{ot_defense}). Then, she changes the instances  of $\widetilde{\mathbf{R}}(\mathrm{FP},\emptyset,\emptyset).p$ to other instances by following the 
transportation moves suggested by   $G_{\widetilde{p},p'}$, i.e., given $G_{\widetilde{p},p'}(a,b)$,  Alice randomly samples $G_{\widetilde{p},p'}(a,b)$ fraction of entries (excluding the fingerprinted entries) whose attribute $p$ takes value $a$ and changes them to $b$.   We summarize the procedure of $\mathrm{Dfs_{col}}(\mathcal{J}')$ in Algorithm \ref{algo:ot_defense}, where lines \ref{line:ot_start}-\ref{line:ot_end} solves (\ref{ot_defense}) to obtain the optimal mass transportation plan for attribute $p$, and lines \ref{line:move_start}-\ref{line:move_end} change the values of entries in $\widetilde{\mathbf{R}}(\mathrm{FP},\emptyset,\emptyset).p$ according to $G_{\widetilde{p},p'}$.

\begin{algorithm}
\small
\SetKwInOut{Input}{Input}
\SetKwInOut{Output}{Output}
\Input{Vanilla fingerprinted database  $\widetilde{\mathbf{R}}(\mathrm{FP},\emptyset,\emptyset)$, locations  of entries changed by the vanilla fingerprinting  scheme, and Alice's prior knowledge on the   joint distributions of the pairwise attributes, i.e.,  $\mathcal{J}'$.}
\Output{$\widetilde{\mathbf{R}}\Big(\mathrm{FP},\emptyset,\mathrm{Dfs_{col}}(\mathcal{J}')\Big)$.}


Initialize $\mathcal{Q}=\emptyset$;

Obtain the empirical joint distributions set $\widetilde{\mathcal{J}}$ using $\widetilde{\mathbf{R}}(\mathrm{FP},\emptyset,\emptyset)$;

\ForAll{$p,q\in\mathcal{F},p\neq q$}{\If{$||J'_{p,q}-\widetilde{J}_{p,q}||_F>\tau_{\mathrm{col}}^{\mathrm{Dfs}}$}{$\mathcal{Q} = \mathcal{Q}\cup p \cup q$;}}

\ForAll{$p\in\mathcal{Q}$}{

Initialize the mass movement cost matrix $\Theta_{\widetilde{p},p'}$ and tuning parameter $\lambda_p$;\label{line:ot_start}

Obtain empirical marginal  distribution $\Pr(C_{\widetilde{p}})$ from $\widetilde{\mathbf{R}}(\mathrm{FP},\emptyset,\emptyset).p$;

       Initialize $G_{\widetilde{p},p'} = e^{-\lambda_{p}\Theta_{\widetilde{p},p'}}$; 
        
\While{not converge}{
Scale the rows of $G_{\widetilde{p},p'}$ to make  the rows sum to the marginal distribution $\Pr(C_{\widetilde{p}})$;

Scale the columns of $G_{\widetilde{p},p'}$ to make the columns sum to the marginal distribution $\Pr(C_{p'})$;
}
 
 \ForAll{$a\in[0,k_p-1]$\label{line:move_start}}{\ForAll{$b\in [0,k_p-1], b\neq a$}{Sample $G_{\widetilde{p},p'}(a,b)$ percentage of entries from $\widetilde{\mathbf{R}}(\mathrm{FP},\emptyset,\emptyset).p$ (excluding the vanilla  fingerprinted entries) whose attribute $p$ takes value $a$, and change their value to $b$;}
  }
 }
 
 Return $\widetilde{\mathbf{R}}\Big(\mathrm{FP},\emptyset,\mathrm{Dfs_{col}}(\mathcal{J}')\Big)$.

		\caption{$\mathrm{Dfs_{col}}(\mathcal{J}')$: defense against column-wise correlation   attack.}
\label{algo:ot_defense}
\end{algorithm}

\subsubsection{Design details of $\mathrm{Dfs_{col}}(\mathcal{J}')$.} 
\label{sec:supp_design_detail_of_dfs_col}

We do not apply the optimal transportation technique to directly move the mass of the joint distributions obtained from $\widetilde{\mathbf{R}}(\mathrm{FP},\emptyset,\emptyset)$ to resemble the joint distributions in    $\mathcal{J}'$. 
One  reason is that, to do so, the database owner (Alice) needs to solve (\ref{ot_defense}) for  $\frac{|\mathcal{F}|(|\mathcal{F}|-1)}{2}$ joint distributions. This is computationally expensive if the  database includes a large number of attributes. Thus, by considering the mass transportation in marginal distributions, the developed mitigation technique becomes more efficient.  
Furthermore, by only considering the marginal distributions, Alice can arrange $\widetilde{\mathbf{R}}\Big(\mathrm{FP},\emptyset,\mathrm{Dfs_{col}}(\mathcal{J}')\Big)$ to have  Pearson's  correlations among attribute pairs that are close to those obtained from $\overline{\mathbf{R}}\Big(\emptyset,\emptyset,\mathrm{Atk_{col}}(\mathcal{J})\Big)$ if $\mathcal{J}'$ is close to $\mathcal{J}$. 
For instance, denote the Pearson's correlation between attributes $p$ and $q$ calculated  from $\widetilde{\mathbf{R}}\Big(\mathrm{FP},\emptyset,\mathrm{Dfs_{col}}(\mathcal{J}')\Big)$ and $\overline{\mathbf{R}}\Big(\emptyset,\emptyset,\mathrm{Atk_{col}}(\mathcal{J})\Big)$ as $\rho_{p',q'}$ and $\rho_{p,q}$, respectively. Then, we have
    $\rho_{p',q'} =  \frac{\sum_{a,b}(a -\mu_{C_{p'}})(b -\mu_{C_{q'}})J_{p,q}'(a,b)}{\sigma_{C_{p'}}\sigma_{C_{q'}}}$, 
where $\mu_{C_{p'}}$ (or $\mu_{C_{q'}}$) 
and $\sigma_{C_{p'}}$ (or $\sigma_{C_{q'}}$) is the expected value   and the standard deviation  of attribute $p$ (or $q$) obtained after applying the vanilla fingerprinting scheme followed by $\mathrm{Dfs_{col}}(\mathcal{J}')$, respectively. Also, $J_{p,q}'(a,b)$ is the database owner's prior knowledge on the joint probability distribution of attribute $p$ taking value $a$ and attribute $q$ taking value $b$. Likewise, 
    $\rho_{p,q} =  \frac{\sum_{a,b}(a -\mu_{C_{p}})(b -\mu_{C_{q}})J_{p,q}(a,b)}{\sigma_{C_{p}}\sigma_{C_{q}}}$, 
where $\mu_{C_{p}}$ (or $\mu_{C_{q}}$) and 
$\sigma_{C_{p}}$ (or $\sigma_{C_{q}}$) is the expected value  and the standard deviation of attribute $p$ (or $q$) in $\overline{\mathbf{R}}\Big(\emptyset,\emptyset,\mathrm{Atk_{col}}(\mathcal{J})\Big)$. Also, $J_{p,q}(a,b)$ is the malicious SP's  prior knowledge on the joint probability distribution of attribute $p$ taking value $a$ and attribute $q$ taking value $b$. If $\mathcal{J}'$ is close to $\mathcal{J}$, then $\mu_{C_{p'}}$ (or $\mu_{C_{q'}}$) is also close to $\mu_{C_{p}}$ (or $\mu_{C_{q}}$), because of the marginalization of the similar joint distributions. Similar discussion also holds for $\sigma_{C_{p'}}$ (or $\sigma_{C_{q'}}$) and $\sigma_{C_{p}}$ (or $\sigma_{C_{q}}$). As a result, $\rho_{p',q'}$ also becomes close to $\rho_{p,q}$, which improves the robustness of the fingerprint (against correlation attacks by a malicious SP), and hence prevents a malicious SP from distorting the potentially fingerprinted positions.

\subsection{Robust Fingerprinting Against Row-wise Correlation Attack}
\label{sec:row-defense}


To make a vanilla fingerprinting scheme also robust against row-wise correlation attack (in Section \ref{sec:row-wise-atack}), we develop another mitigation technique, i.e.,  $\mathrm{Dfs_{row}}(\mathcal{S}')$.
The main goal of $\mathrm{Dfs_{row}}(\mathcal{S}')$ is to avoid a malicious SP from distorting the fingerprint due to discrepancies in the expected statistical relationships between data records. 
Different from the design principle of $\mathrm{Dfs_{col}}(\mathcal{J}')$, which makes the newly obtained  joint distributions resemble the prior knowledge, we design $\mathrm{Dfs_{row}}(\mathcal{S}')$ by changing selected entries of non-fingerprinted data records to  make the newly obtained  statistical relationships as far away from Alice's prior knowledge $\mathcal{S}'$ as possible. 
This is because the row-wise correlation attack usually changes limited number of entries in the vanilla fingerprinted database (as we validate in Section \ref{sec: int_attack_census}), thus, to make the newly obtained  statistical relationships   resemble $\mathcal{S}'$, one needs to change all non-fingerprinted data records and this will significantly compromise the database utility. 
{\color{black}Instead, by making the new statistical relationships far away from her prior knowledge, Alice can make additional (non-fingerprinted) data records that have cumulative absolute difference (with respect to the other records in the same community) exceeding a predetermined threshold. As a result, when launching $\mathrm{Atk_{row}(\mathcal{J})}$, the malicious SP will identify wrong data records ($\boldsymbol{r}_i$), which causes $\sum_{j\neq i}^{n_c} \   \big|s_{ij}^{\mathrm{comm}_c}-\widetilde{s_{ij}}^{\mathrm{comm}_c}\big| \geq \tau_{\mathrm{row}}^{\mathrm{Atk}}$, and hence change the non-fingerprinted records.}

In $\mathrm{Dfs_{row}}(\mathcal{S}')$, Alice selects a subset of  non-fingerprinted data records in a community $c$, i.e., $\mathcal{E}_c\subset\mathrm{comm}_c, c\in[1,C]$, and changes their value to   $\widetilde{\widetilde{\boldsymbol{r}_i}}, i\in\mathcal{E}_c$, such that the cumulative absolute  difference  between statistical relationships in her prior knowledge and those obtained from the   fingerprinted database  achieves the maximum difference   after   applying   $\mathrm{Dfs_{row}}(\mathcal{S}')$. This can be formulated  as  the following optimization problem:

\begin{equation}
\begin{aligned}
\max_{\mathcal{E}_c,\widetilde{\widetilde{\boldsymbol{r}_i}}} \quad &  d(\mathcal{E}_c)=   \Big| \sum_{j\in  \mathrm{comm}_c/\mathcal{E}_c}\sum_{i\in\mathcal{E}_c} \   \Big|{s'_{ij}}^{\mathrm{comm}_c}-\widetilde{\widetilde{s_{ij}}}^{\mathrm{comm}_c}\Big| \\
& \qquad\quad -\sum_{j\in  \mathrm{comm}_c/\mathcal{E}_c}\sum_{i\in\mathcal{E}_c} \   \Big|{s'_{ij}}^{\mathrm{comm}_c}- \widetilde{s_{ij}}^{\mathrm{comm}_c}\Big| \Big| \\
\textrm{s.t.} \quad & \mathcal{E}_c\subset \mathrm{comm}_c/\mathcal{Q}_c,\\
   & \widetilde{\widetilde{s_{ij}}}^{\mathrm{comm}_c} = e^{-\mathrm{dist}(\widetilde{\widetilde{\boldsymbol{r}_i}},\boldsymbol{r}_j)}, i\in\mathcal{E}_c,j \in\mathrm{comm}_c/\mathcal{E}_c,\\
   & \widetilde{\widetilde{\boldsymbol{r}_i}} = \mathrm{value\  change}(\widetilde{\boldsymbol{r}_i}), i \in\mathcal{E}_c,\\
    &  |\mathcal{E}_c|\leq \left \lceil{n_c\gamma}\right \rceil,
\end{aligned}
\label{eq:row_defense}
\end{equation}
$\forall c\in[1,C]$. $\mathcal{Q}_c$ is the set of fingerprinted records in community $c$, ${s_{ij}'}^{\mathrm{comm}_c}$ denotes Alice's prior knowledge on  the statistical relationship between individuals $i$ and $j$ in community $c$,  $\widetilde{s_{ij}}^{\mathrm{comm}_c}$ is the   statistical relationship between individuals $i$ and $j$ in community $c$ in $\widetilde{\mathbf{R}}(\mathrm{FP},\emptyset,\emptyset)$, whose $i$th data record is denoted as $\widetilde{\boldsymbol{r}_i}$, 
and  $\widetilde{\widetilde{s_{ij}}}^{\mathrm{comm}_c}$ is such information  obtained from  $\widetilde{\mathbf{R}}(\mathrm{FP},\mathrm{Dfs_{row}}(\mathcal{S}'),\emptyset)$, whose $i$th data record is represented as $\widetilde{\widetilde{\boldsymbol{r}_i}}$. Also, $\mathrm{value\ change}(\cdot)$ is the function that changes each attribute of $\widetilde{\boldsymbol{r}_i}$, and it will be elaborated later. In (\ref{eq:row_defense}), we let the  cardinality of $\mathcal{E}_c$ to  be smaller than $\left \lceil{n_c\gamma}\right \rceil$ ($\gamma$ is the percentage of fingerprinted records)  to   restrict the number of selected non-fingerprinted records to maintain database utility.

(\ref{eq:row_defense}) is an NP-hard  combinatorial search problem \cite{bertsimas1997introduction}. 
Thus,  we use a greedy algorithm to determine $\mathcal{E}_c$ and a heuristic approach to obtain $\widetilde{\widetilde{\boldsymbol{r}_i}}, i\in\mathcal{E}_c$.   In fact, (\ref{eq:row_defense}) also belongs to the problems of set function maximization, which can be connected  to submodular optimization \cite{wei2015submodularity}, and     greedy algorithms are widely used for   selecting candidate sets. Specifically, Alice   constructs     $\mathcal{E}_c$ by greedily      choosing up to  $\left \lceil{n_c\gamma}\right \rceil$   non-fingerprinted  data records (in $\widetilde{\mathbf{R}}(\mathrm{FP},\emptyset,\emptyset)$) that  have the maximum cumulative absolute difference (i.e., $\sum_{j\in  \mathrm{comm}_c,j\neq i}^{n_c}  \big|{s'_{ij}}^{\mathrm{comm}_c}-  \widetilde{s_{ij}}^{\mathrm{comm}_c}\big|,i\in\mathrm{comm}_c/\mathcal{Q}_c$) with Alice's prior knowledge ($\mathcal{S}'$). 
Next, she changes   the value of each attribute  of the selected data records in  $\mathcal{E}_c$ to the most frequent occurring instance of that attribute to obtain $\widetilde{\widetilde{\boldsymbol{r}_i}}$ (i.e.,  $\widetilde{\widetilde{\boldsymbol{r}_i}} = \mathrm{value\  change}(\widetilde{\boldsymbol{r}_i})$).
We   describe the steps to apply $\mathrm{Dfs_{row}}(\mathcal{S}')$ in Algorithm \ref{algo:row-wise-defense}. 

\begin{algorithm}
\small
\SetKwInOut{Input}{Input}
\SetKwInOut{Output}{Output}
\Input{Vanilla fingerprinted database, $\widetilde{\mathbf{R}}(\mathrm{FP},\emptyset,\emptyset)$, fingerprinting ratio $\gamma$, database owner's prior knowledge on the row-wise correlations  $\mathcal{S}'$ and individuals' affiliation to the $C$ communities.}
\Output{$\widetilde{\mathbf{R}}\Big(\mathrm{FP},\mathrm{Dfs_{row}}(\mathcal{S}'),\emptyset,\Big)$.}

Obtain $\widetilde{\mathcal{S}}$, i.e.,  the   set of  pairwise statistical relationships among individuals in each community,   from the vanilla fingerprinted database $\widetilde{\mathbf{R}}(\mathrm{FP},\emptyset,\emptyset)$;

\ForAll{$\mathrm{comm}_c, c\in[1,C]$}{\ForAll{non-fingerprinted individual $i\in \mathrm{comm}_c/\mathcal{Q}_c$}{

Calculate 
$e_i = \sum_{j\in  \mathrm{comm}_c,j\neq i}^{n_c}  \big|{s'_{ij}}^{\mathrm{comm}_c}-  \widetilde{s_{ij}}^{\mathrm{comm}_c}\big|, i\in\mathrm{comm}_c/\mathcal{Q}_c$;

}

Obtain the largest $\left \lceil{n_c\gamma}\right \rceil$ $e_i$'s, and collect these row index $i$ in set $\mathcal{E}_c$;

\ForAll{row index $i\in\mathcal{E}_c$}{

$\widetilde{\widetilde{\boldsymbol{r}_i}} = \mathrm{value\  change}(\widetilde{\boldsymbol{r}_i})$;
\CommentSty{//change the value of each attribute of $\widetilde{\boldsymbol{r}_i}$ to the most frequently  occurred instance of that attribute in $\mathrm{comm}_c$.\\}
}

}

 Return $\widetilde{\mathbf{R}}\Big(\mathrm{FP},\mathrm{Dfs_{row}}(\mathcal{S}'),\emptyset,\Big)$.

		\caption{$\mathrm{Dfs_{row}}(\mathcal{S}')$: defense against row-wise correlation attack.}
\label{algo:row-wise-defense}
\end{algorithm}

The   solution to (\ref{eq:row_defense}) depends on the database and the distribution of data entries, thus, it is infeasible to derive a generic closed-form expression to  quantify the mitigation performance of $\mathrm{Dfs_{row}}(\mathcal{S}')$. However, in Section \ref{sec:exp_defense_on_census}, we will empirically show that the fraction of the fingerprinted entries inferred by $\mathrm{Atk_{row}}(\mathcal{S})$ will decrease significantly if Alice applies the post-processing step $\mathrm{Dfs_{row}}(\mathcal{S}')$.


\subsection{Integrated Robust Fingerprinting}
Although after applying  $\mathrm{Dfs_{row}}(\mathcal{S}')$, the malicious SP may still identify (and distort) some fingerprinted data records using $\mathrm{Atk_{row}}(\mathcal{J})$, the amount of distortion in the fingerprint will not be enough to compromise the fingerprint bit-string due to the majority voting considered in the vanilla scheme. In Section \ref{sec: int_attack_census}, we validate that Algorithm \ref{algo:row-wise-defense} can successfully mitigate the row-wise correlation attack in a real-world  database. Since $\mathrm{Dfs_{row}}(\mathcal{S}')$ changes less number of entries than $\mathrm{Dfs_{col}}(\mathcal{J}')$, database owner will apply $\mathrm{Dfs_{row}}(\mathcal{S}')$ first after the vanilla fingerprinting.  In Algorithm \ref{algo:robust-fp}, we summarize the main steps of our integrated robust fingerprinting scheme against the identified correlation attacks.

\begin{algorithm}
\small
\SetKwInOut{Input}{Input}
\SetKwInOut{Output}{Output}
\Input{A database $\mathbf{R}$, a vanilla fingerprinting scheme $\mathrm{FP}$, database owner's prior knowledge on the column-wise and row-wise correlation, i.e., $\mathcal{J}'$ and $\mathcal{S}'$, and individuals' affiliation to the $C$ communities.}

\Output{$\widetilde{\mathbf{R}}\Big(\mathrm{FP},\mathrm{Dfs_{row}}(\mathcal{S}'),\mathrm{Dfs_{col}}(\mathcal{J}')\Big)$.}

Apply the vanilla fingerprinting scheme on $\mathbf{R}$ and obtain $\widetilde{\mathbf{R}}\Big(\mathrm{FP},\emptyset,\emptyset\Big)$;

Apply  $\mathrm{Dfs_{row}}(\mathcal{S}')$  on  $\widetilde{\mathbf{R}}\Big(\mathrm{FP},\emptyset,\emptyset\Big)$ using Algorithm \ref{algo:row-wise-defense} and obtain $\widetilde{\mathbf{R}}\Big(\mathrm{FP},\mathrm{Dfs_{row}}(\mathcal{S}'),\emptyset\Big)$;

Apply  $\mathrm{Dfs_{col}}(\mathcal{J}')$ on $\widetilde{\mathbf{R}}\Big(\mathrm{FP},\mathrm{Dfs_{row}}(\mathcal{S}'),\emptyset\Big)$ using Algorithm \ref{algo:ot_defense} and obtain $\widetilde{\mathbf{R}}\Big(\mathrm{FP},\mathrm{Dfs_{row}}(\mathcal{S}'),\mathrm{Dfs_{col}}(\mathcal{J}')\Big)$;

		\caption{Robust fingerprinting against correlation attacks.}
\label{algo:robust-fp}
\end{algorithm}

\section{Evaluation}
\label{sec:eva}
Now, we evaluate the   correlation attacks and the   robust fingerprinting mechanisms, investigate their impact on fingerprint robustness and  database utility, and empirically study  the effect of   knowledge asymmetry between the  database owner and a malicious SP.

\subsection{Experiment Setup}

We consider   a Census database \cite{asuncion2007uci} as the study case. 
As discussed in Section \ref{sec:vanilla}, we choose the state-of-the-art scheme developed in \cite{li2005fingerprinting} as the vanilla mechanism, because it is shown to be robust against common attacks (such as random bit flipping, subset, and superset attacks).
We use 128-bits fingerprint   string  ($L=128$) for    the vanilla   scheme, because when considering $N$ SPs, as long as $L> \ln N$, the vanilla scheme can thwart exhaustive search and various types of attacks \cite{li2005fingerprinting}, and in most cases a 64-bits fingerprint string is shown to provide high robustness. 


In different experiments, to distinguish different instances of the row-wise and column-wise correlations, we also parametrize $\mathcal{J}'$, $\mathcal{S}'$, $\mathcal{J}$, and $\mathcal{S}$  when specifying their resources. For instance,    $\mathcal{J}'(\mathbf{R})$   indicates Alice's prior knowledge on column-wise correlations   are  calculated directly from the original database.



\subsection{Evaluations on Census Database}
\label{sec:census_study}


Census database \cite{asuncion2007uci}  records 14 discrete or categorical attributes of 32561 individuals.  To add fingerprint to this database, Alice first encodes the values of each attribute as integers in a way that the LSB carries the least information. {\color{black}Recall that to achieve high database utility, we let the vanilla scheme only fingerprint the LSBs (in Appendix \ref{sec:utility_LSB} we validate that fingerprinting the other bits reduces database utility).} In particular, for a discrete numerical attribute (e.g., age), the values are first sorted in an ascending order and then divided into   non-overlapping ranges, which are then encoded as ascending  integers starting from 0. For a categorical attribute (e.g., marital-status),  the instances are first mapped to a high dimensional space via the word embedding technique \cite{mikolov2013distributed}. Words having similar meanings appear roughly in the same area of the   space. After mapping, these vectors are clustered into a hierarchical tree structure, where each leaf node represents an instance of that attribute and is encoded by an integer and the adjacent leaf nodes differ in the LSB.    Besides, we use K-means   to group the individuals in the Census database into non-overlapping communities, and according to the Schwarz's Bayesian inference criterion (BIC) \cite{schwarz1978estimating}, the optimal number of communities is $C=10$.

\begin{table*}[htb]
\begin{center}
 \begin{tabular}{c|c | c|c| c | c | c| c| c| c| c  } 
 \hline
 \hline
 Attack on           &   robustness  \&  & using &  \multicolumn{8}{c}{    rounds of $\mathrm{Atk_{col}}\Big(    \mathcal{J}(\mathbf{R}) \Big)$} \\
\cline{4-11}
 $\widetilde{\mathbf{R}}(\mathrm{FP},\emptyset,\emptyset)$ &   utility loss & $\mathrm{Atk_{row}}\Big(   \mathcal{S}(\mathbf{R}) \Big)$  & 1&2&3&4&5&6&7&8 \\
 \hline
using & $\mathrm{num_{cmp}}$ & \multirow{3}{*}{N/A }  &  28  &  43 &   55   & 58   & 63   & 74 &   77  & \cellcolor{Green}82  \\ 
\multirow{2}{*}{$\mathrm{Atk_{col}}\Big(\mathcal{J}(\mathbf{R})\Big)$} & $r$  &    & u     &      u     &      u    & <  0.08\%    &  <  0.73\%   & <  53.2\%   & < 71.8\%   & \cellcolor{Green}$< 91.4\%$   \\
  & $\mathrm{per_{chg}}$ & & \cellcolor{Gray}$4.4\%$ & \cellcolor{Gray}$10.9\%$ & \cellcolor{Gray}$14.2\%$ &  \cellcolor{Gray}$15.9\%$ &  \cellcolor{Gray}$18.4\%$ & \cellcolor{Gray}$23.7\% $ & \cellcolor{Gray}$25.3\%$ & \cellcolor{Gray}$27.1\%$\\
  \hline
using $\mathrm{Atk_{row}}\Big(   \mathcal{S}(\mathbf{R}) \Big)$ & $\mathrm{num_{cmp}}$ &     \cellcolor{Blue}78   &   78  &  79   & 80    &81   & 82  &  83  &  83  &  \cellcolor{Red}83 \\ 
  \multirow{2}{*}{ and  $\mathrm{Atk_{col}}\Big(\mathcal{J}(\mathbf{R})\Big)$} &  $r$    & \cellcolor{Blue}< $82.9\%$  &  < 82.9\%   & <  89.1\%  & < 89.4\%  & < 90.1\%    &<  91.4\%  & < 93.7\%  & < 93.7\%  & \cellcolor{Red}< 93.7\%   \\
 & $\mathrm{per_{chg}}$ & \cellcolor{Blue}2.9\% & 8.9\% & 10.5\%  & 11.3\% & 11.5\% & 11.9\% & 12.6\% & 13.7\% & \cellcolor{Red}14.2\%\\
  \hline
  \hline
\end{tabular}
\end{center}
\caption{Fingerprint robustness and utility loss of different correlation attacks on the vanilla fingerprinted Census database $\widetilde{\mathbf{R}}(\mathrm{FP},\emptyset,\emptyset)$. The fingerprint robustness  metrics are the  number of compromised fingerprint bits, i.e.,  $\mathrm{num_{cmp}}$ and i.e., accusable ranking $r$. The   utility loss of the malicious SP is the  fraction of modified entries as a result of the attack, i.e., $\mathrm{per_{chg}} = 1-Acc(\overline{\mathbf{R}})$. `u' stands for uniquely accusable. `< $r$' means top $r$ accusable.} 
\label{table:correlation_attack_compare}
\end{table*}

\subsubsection{Impact of Correlation Attacks on Census Database}
\label{sec: int_attack_census}

We first study  the impact  of $\mathrm{Atk_{row}}(\mathcal{S})$ and $\mathrm{Atk_{col}}(\mathcal{J})$, and then present the impact of the integration of them.  In this experiment, we assume that the malicious SP  has the ground truth knowledge about the row- and column-wise correlations, i.e., it has access to $\mathcal{S}$ and $\mathcal{J}$ that  are directly computed from $\mathbf{R}$. 
As a result, we represent its prior knowledge as  $\mathcal{S}(\mathbf{R})$ and $\mathcal{J}(\mathbf{R})$. By launching  the  row-wise,  column-wise, and integrated correlation attack, the  malicious SP   generates  pirated database      $\overline{\mathbf{R}}\Big(\mathrm{FP},\mathrm{Atk_{row}}(\mathcal{S}(\mathbf{R})),\emptyset)\Big)$,   $\overline{\mathbf{R}}\Big(\mathrm{FP},\emptyset,\mathrm{Atk_{col}}(\mathcal{J}(\mathbf{R}))\Big)$, and $\overline{\mathbf{R}}\Big(\mathrm{FP},\mathrm{Atk_{row}}(\mathcal{S}(\mathbf{R})),\mathrm{Atk_{col}}(\mathcal{J}(\mathbf{R}))\Big)$, respectively.

\noindent\textbf{Impact of $\mathrm{Atk_{col}}(\mathcal{J}(\mathbf{R}))$.} 
First, we validate that $\mathrm{Atk_{col}}(\mathcal{J}(\mathbf{R}))$ is more powerful than the   random bit flipping attack $\mathrm{Atk_{rnd}}$ discussed in Section \ref{sec:system-threat}. 
We set the threshold $\tau_{\mathrm{col}}^{\mathrm{Atk}} = 0.0001$ when comparing  $|J_{p,q}(a,b)-\widetilde{J_{p,q}}(a,b)|$.\footnote{In all experiments, we choose a small value for $\tau_{\mathrm{col}}^{\mathrm{Atk}}$, $\tau_{\mathrm{col}}^{\mathrm{Dfs}}$, and $\tau_{\mathrm{col}}$, because a database usually contains     thousands of data records and the addition of fingerprint changes a small fraction of entries, which does not cause large changes in the joint distributions. On the contrary, we choose a large value for $\tau_{\mathrm{row}}^{\mathrm{Atk}}$  and $\tau_{\mathrm{row}}$, because the statistical relationship is defined as an exponentially decay function, which ranges from 0 to 1, and the added fingerprint results in a larger change for this  statistical relationship.} As a result, it takes 8 iterations (attack rounds) for $\mathrm{Atk_{col}}(\mathcal{J}(\mathbf{R}))$ to converge (i.e., stop including new suspicious fingerprinted positions in $\mathcal{P}$). In Table \ref{table:correlation_attack_compare}, we record the fingerprint robustness  (i.e.,  $\mathrm{num_{cmp}}$ and   $r$) and utility loss of the malicious SP (fraction of modified entries as a result of the attack, i.e.,  $\mathrm{per_{chg}} = 1-Acc(\overline{\mathbf{R}})$) when launching increasing rounds of $\mathrm{Atk_{col}}(\mathcal{J}(\mathbf{R}))$ 
 on the vanilla fingerprinted Census database. We observe that with more attack rounds, more fingerprint bits are compromised, and the accusable ranking of the malicious SP also decreases, which suggests that Alice may accuse innocent SP with increasing probability.   In Table \ref{table:atk_rnd}, we present the performance of $\mathrm{Atk_{rnd}}$ on the vanilla  fingerprinted   database. Specifically, by setting the fraction of entries changed ($\mathrm{per_{chg}}$) due to $\mathrm{Atk_{rnd}}$  equal to that of $\mathrm{Atk_{col}}(\mathcal{J}(\mathbf{R}))$ with increasing rounds (i.e., the cells highlighted in gray in Table \ref{table:correlation_attack_compare}), we calculated $\mathrm{num_{cmp}}$ and $r$ achieved by $\mathrm{Atk_{rnd}}$. 

\begin{table}[htb]
\begin{center}
 \begin{tabular}{c | c|c| c | c | c| c } 
 \hline
 \hline
 $\mathrm{per_{chg}}$&    $\leq14.2\%$ &  $15.9\%$ &  $18.4\%$ & $23.7\% $ & $25.3\%$ &  $27.1\%$ \\
 \hline
 $\mathrm{num_{cmp}}$ &  0    &1   & 1  &  2  &  3  &   4 \\ 
$r$ &    u   & u    & u  &u   & u  & u    \\
  \hline
  \hline
\end{tabular}
\caption{Performance and cost of  $\mathrm{Atk_{rnd}}$ on the vanilla fingerprinted Census database. $\mathrm{per_{chg}}$ values are set to be equal to that of  $\mathrm{Atk_{col}}(\mathcal{J}(\mathbf{R}))$  (cells highlighted in gray in Table \ref{table:correlation_attack_compare}). $r=\text{u}$ means uniquely accusable.} 
\label{table:atk_rnd}
\end{center}
\end{table}

Combining  Tables \ref{table:correlation_attack_compare}  and \ref{table:atk_rnd}, we observe that if $\mathrm{per_{chg}}$ is below $14.2\%$, $\mathrm{Atk_{rnd}}$ cannot compromise any fingerprint bits, whereas $\mathrm{Atk_{col}}(\mathcal{J}(\mathbf{R}))$ compromises 28 fingerprint bits (out of 128).  Even when $\mathrm{per_{chg}} = 27.1\%$, $\mathrm{Atk_{rnd}}$ can only distort 4 fingerprint bits. As a result, if the malicious SP launches $\mathrm{Atk_{rnd}}$, it will be uniquely accusable for pirating the database. 
Whereas, when $\mathrm{per_{chg}} = 27.1\%$  $\mathrm{Atk_{col}}(\mathcal{J}(\mathbf{R}))$ distorts 82 bits, which makes the malicious SP only rank top $91.4\%$ accusable and will cause Alice accuse innocent SP with very high probability (the cells highlighted in green in Table \ref{table:correlation_attack_compare}).  
In fact, for $\mathrm{Atk_{rnd}}$ to compromise enough fingerprint bits so as to cause Alice to accuse innocent SPs, it needs to flip more than $83\%$ of the entries in the fingerprinted Census database. Clearly, the vanilla fingerprint scheme is robust against $\mathrm{Atk_{rnd}}$, however, its robustness significantly degrades against $\mathrm{Atk_{col}}(\mathcal{J}(\mathbf{R}))$.

\begin{figure}
  \begin{center}
  \begin{tabular}{ccc}
     \includegraphics[width= .31\columnwidth,height=2.5cm]{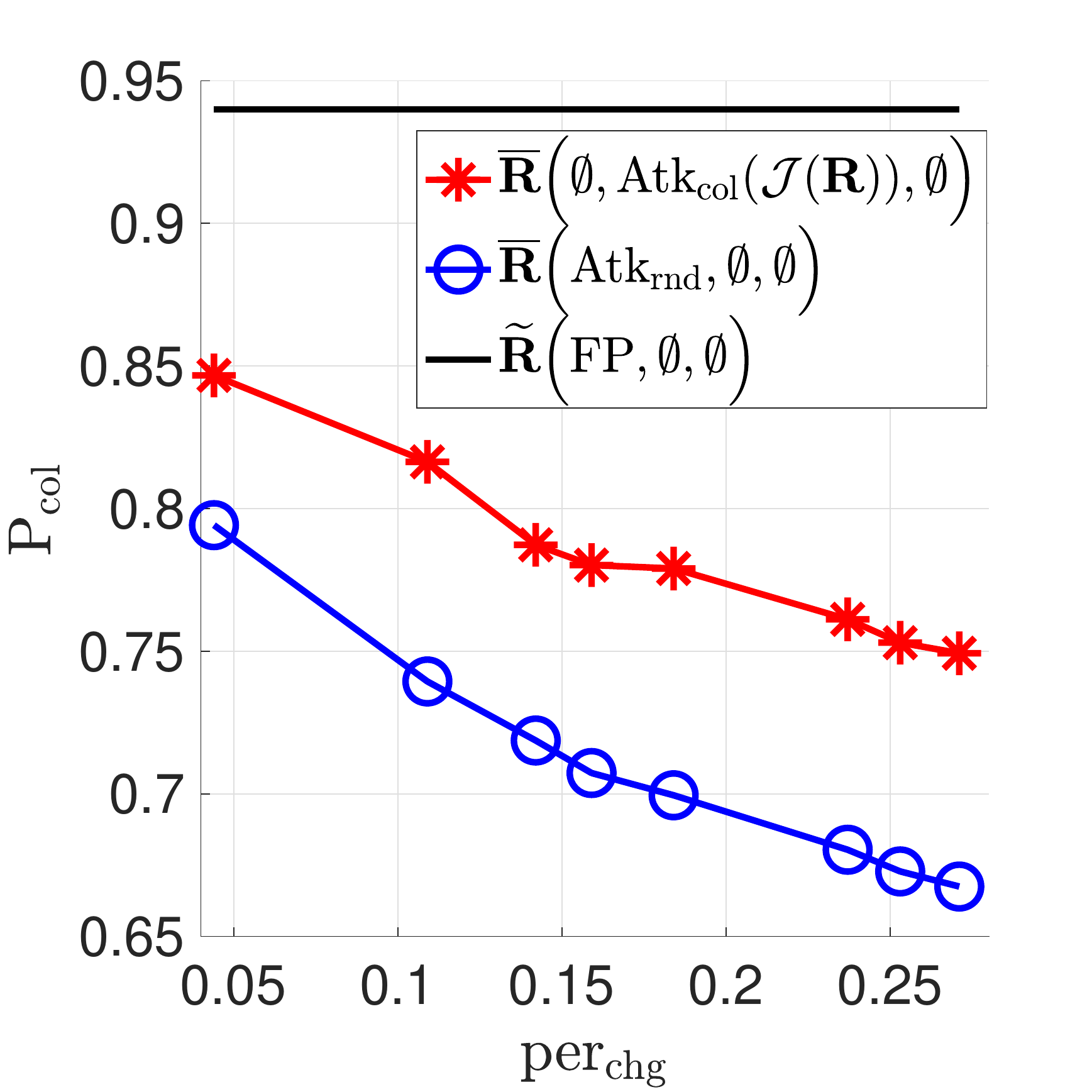}
		&
		     \includegraphics[width= .31\columnwidth,height=2.5cm]{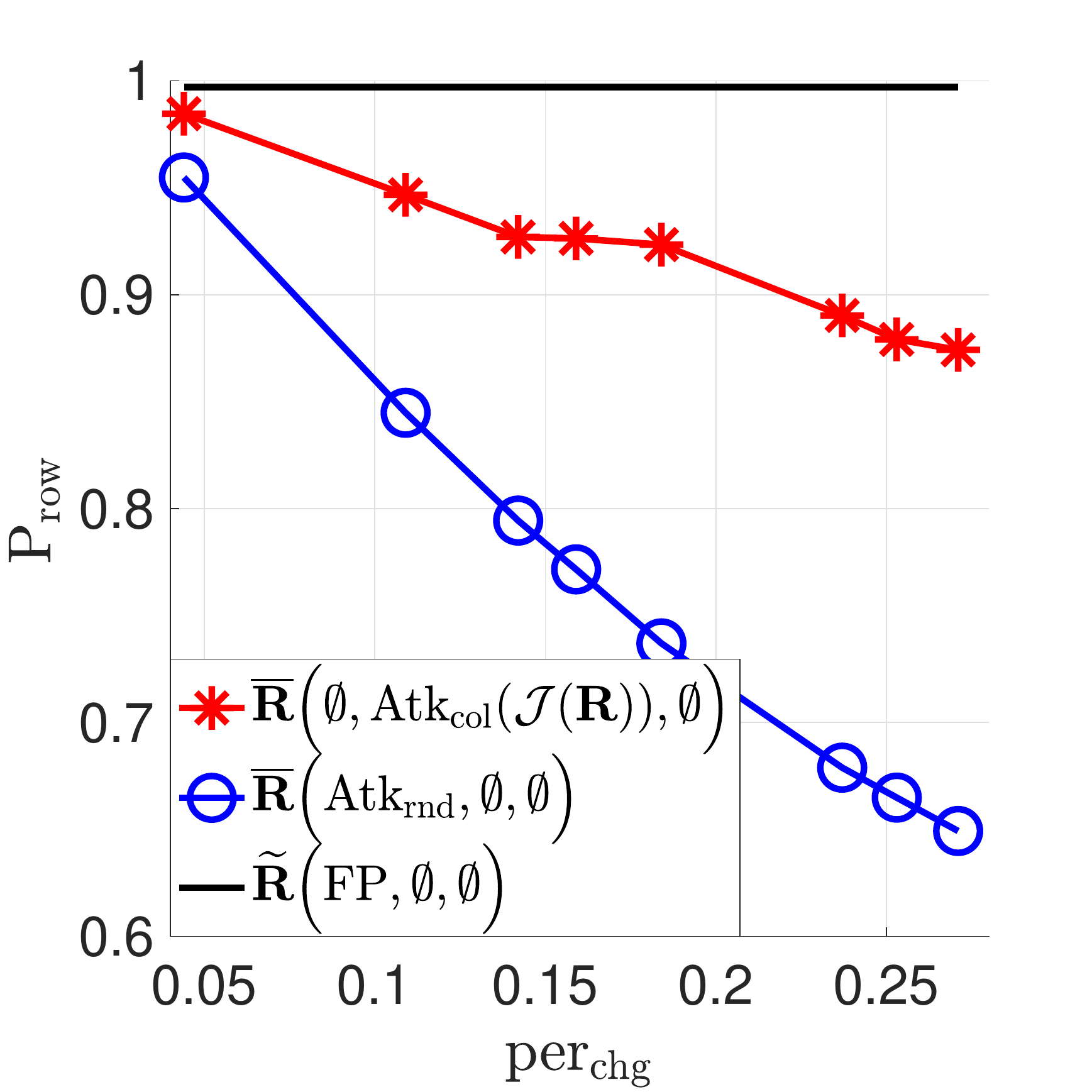}
		&
     \includegraphics[width= .31\columnwidth,height=2.5cm]{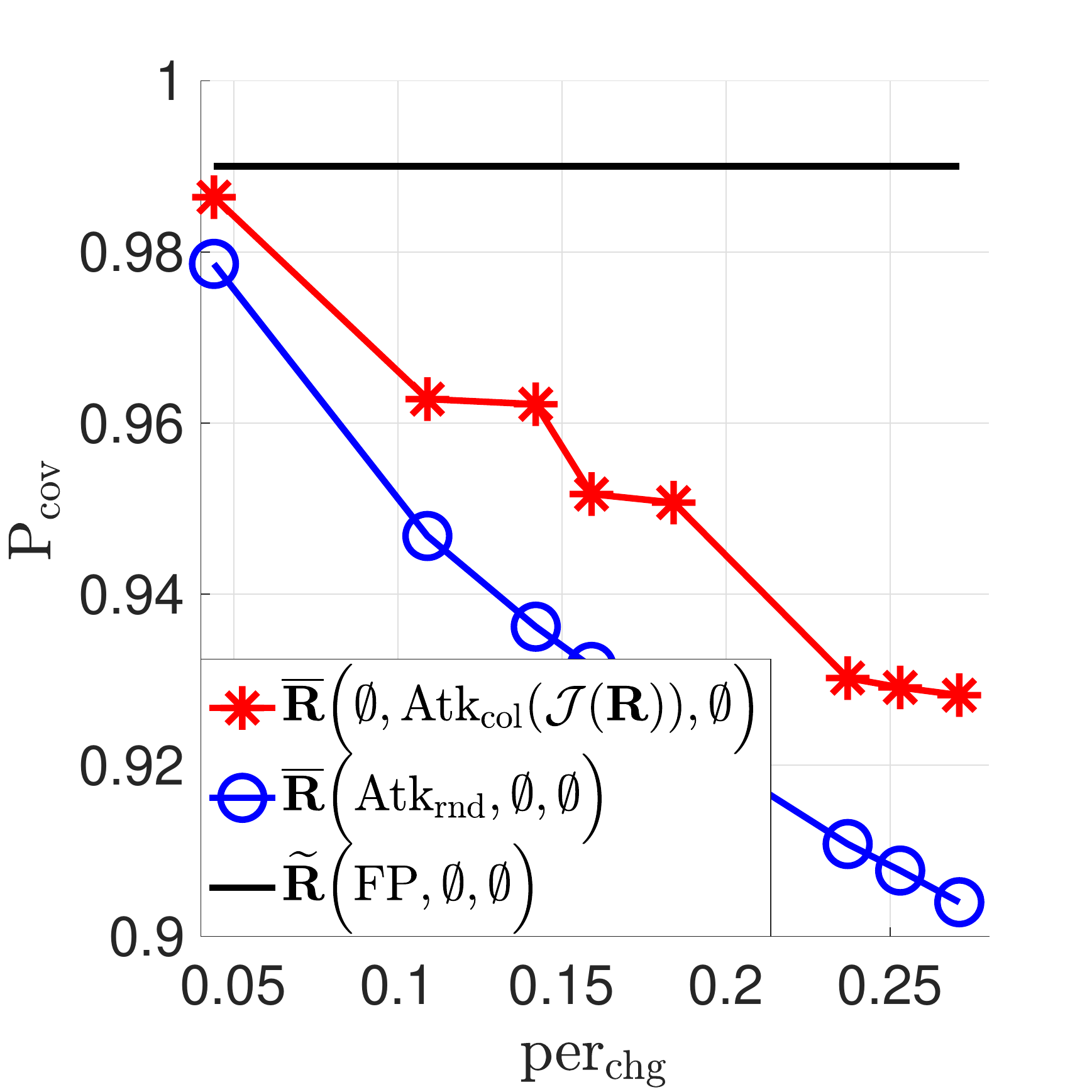}
     \\
    {\small  (a) $\mathrm{P_{col}}$ }  &
    {\small  (b) $\mathrm{P_{row}}$ } &
    {\small  (c) $\mathrm{P_{cov}}$ }\\
        \end{tabular}
      \end{center}
  \caption{\label{fig:atkcol_vs_atkrnd}  Comparison of (a) $\mathrm{P_{col}}$, (b) $\mathrm{P_{row}}$, and (c) $\mathrm{P_{cov}}$ achieved by $\overline{\mathbf{R}}(\emptyset,\emptyset,\mathrm{Atk_{col}}(\mathcal{J}(\mathbf{R})))$ and  $\overline{\mathbf{R}}(\mathrm{Atk_{rnd}},\emptyset,\emptyset)$ when $\mathrm{per_{chg}}$ are set as the values highlighted in gray in Table \ref{table:correlation_attack_compare}.}
\end{figure}

In Figure \ref{fig:atkcol_vs_atkrnd}, by setting  $\mathrm{per_{chg}}$ to the values highlighted in gray in Table \ref{table:correlation_attack_compare}, 
we compute and compare  the    {\color{black}utility of the pirated database (i.e, $\mathrm{P_{col}}(\overline{\mathbf{R}})$ for $\tau_{\mathrm{col}} = 0.0001$, $\mathrm{P_{row}}(\overline{\mathbf{R}})$ for $\tau_{\mathrm{row}} = 10$,  and $\mathrm{P_{cov}}(\overline{\mathbf{R}})$)} obtained from the  vanilla fingerprinted database after $\mathrm{Atk_{rnd}}$ and $\mathrm{Atk_{col}}(\mathcal{J}(\mathbf{R}))$, i.e., $\overline{\mathbf{R}}(\mathrm{Atk_{rnd}},\emptyset,\emptyset)$ and $\overline{\mathbf{R}}(\emptyset,\emptyset,\mathrm{Atk_{col}}(\mathcal{J}(\mathbf{R})))$. We also plot the utility of the vanilla fingerprinted database 
using black lines as the benchmark ($\mathrm{P_{col}}(\widetilde{\mathbf{R}}(\mathrm{FP},\emptyset,\emptyset))=0.95$, $\mathrm{P_{row}}(\widetilde{\mathbf{R}}(\mathrm{FP},\emptyset,\emptyset))=1.00$, and $\mathrm{P_{cov}}(\widetilde{\mathbf{R}}(\mathrm{FP},\emptyset,\emptyset))=0.99$). Clearly, $\overline{\mathbf{R}}(\emptyset,\emptyset,\mathrm{Atk_{col}}(\mathcal{J}(\mathbf{R})))$ always achieves higher utility values than $\overline{\mathbf{R}}(\mathrm{Atk_{rnd}},\emptyset,\emptyset)$, and it has similar utility values compared to $\widetilde{\mathbf{R}}(\mathrm{FP},\emptyset,\emptyset)$ when $\mathrm{per_{chg}}$ is small, e.g., if $\mathrm{per_{chg}}\leq 20\%$,  $\mathrm{P_{col}}\big(\overline{\mathbf{R}}(\emptyset,\emptyset, \mathrm{Atk_{col}}(\mathcal{J}(\mathbf{R})))\big)\geq 0.92$. Combining Table \ref{table:correlation_attack_compare}, \ref{table:atk_rnd}, and Figure \ref{fig:atkcol_vs_atkrnd}, we conclude that $\mathrm{Atk_{col}}(\mathcal{J}(\mathbf{R}))$ is not only more powerful (in terms of distorting the fingerprint bit-string), but it also preserves more database utility compared to $\mathrm{Atk_{rnd}}$. 
{\color{black}In addition to the generic utility metrics defined in Section \ref{sec:utility_metrics}, we also calculate and compare the utility some specific statistical computations on the pirated database. For example, under the same attack performance (i.e., compromising exactly 63 fingerprinting bits) $\mathrm{Atk_{col}}(\mathcal{J})$ only causes $0.3\%$ change in the frequency of individuals   having bachelor degree or higher and $0.01$ change for the standard deviation of individuals' age, whereas, the same values for $\mathrm{Atk_{rnd}}$ are $1.4\%$ and $0.12$, respectively. }

\noindent\textbf{Impact of $\mathrm{Atk_{row}}(\mathcal{S}(\mathbf{R}))$.} 
{\color{black}By setting the threshold $\tau_{\mathrm{row}}^{\mathrm{Atk}} = 0.1$ when comparing $\textstyle\sum_{j\neq i}^{n_c}|s_{ij}^{\mathrm{comm}_c}-\widetilde{s_{ij}}^{\mathrm{comm}_c}|$ in Section \ref{sec:row-wise-atack}} 
we show the impact of $\mathrm{Atk_{row}}(\mathcal{S}(\mathbf{R}))$ in the blue cells of  Table \ref{table:correlation_attack_compare}. After launching  row-wise correlation attack on $\widetilde{\mathbf{R}}(\mathrm{FP},\emptyset,\emptyset)$, 78 fingerprint bits are distorted at the cost of only $2.9\%$ utility loss. It makes the malicious SP only rank top $82.9\%$ accusable, and may cause Alice accuse innocent SP with high probability.  
In particular, we   have  $\mathrm{P_{col}}\big(\overline{\mathbf{R}}(\emptyset,\mathrm{Atk_{col}}(\mathcal{S}(\mathbf{R})),\emptyset)\big) = 0.90$, $\mathrm{P_{row}}\big(\overline{\mathbf{R}}(\emptyset, \mathrm{Atk_{col}}(\mathcal{S}(\mathbf{R})),\emptyset)\big)= 0.95$, and $\mathrm{P_{cov}}\big(\overline{\mathbf{R}}(\emptyset,\mathrm{Atk_{col}}(\mathcal{S}(\mathbf{R})),\emptyset)\big)= 0.97$, which are all closer to that of  $\widetilde{\mathbf{R}}(\mathrm{FP},\emptyset,\emptyset)$. This suggests again that the identified correlation attacks are powerful than the  conventional attacks and they can maintain the utility of database.

\noindent\textbf{Impact of integrated  correlation attack.} 
By launching  $\mathrm{Atk_{row}}(\mathcal{S}(\mathbf{R}))$ on $\widetilde{\mathbf{R}}(\mathrm{FP},\emptyset,\emptyset)$ followed by 8 rounds of $\mathrm{Atk_{col}}(\mathcal{J}(\mathbf{R}))$, the integrated correlation attack can   distort more fingerprint bits, i.e., 83 bits, which makes the malicious SP's accusable ranking drops to top $93.7\%$ (the cells highlighted in red in Table \ref{table:correlation_attack_compare}). This suggests again that the vanilla fingerprint scheme is not capable of identifying the guilty SP that is liable for pirating the database if the malicious SP utilizes data correlations  to distort the fingerprint.

Note that, although $\mathrm{Atk_{col}}(\mathcal{J}(\mathbf{R}))$ has similar attack performance compared to the integrated attack, its utility loss is higher, i.e., $27.1\%$ entries are modified by the attacker. Besides, at the early stages of $\mathrm{Atk_{col}}(\mathcal{J}(\mathbf{R}))$, the malicious SP cannot distort more than half of the fingerprint bits (e.g., at the end of the $5$th round, only 63 bits are compromised by modifying $15.9\%$ of the entries), which is inadequate to cause Alice accuse innocent SPs and also makes the malicious SP uniquely accusable.  Since $\mathrm{Atk_{row}}(\mathcal{S}(\mathbf{R}))$ can distort sufficient  fingerprint bits and cause Alice to accuse innocent SPs with high probability at a much lower utility loss {\color{black}(measured using both generic utility metrics and specific statistical utilities, like the change in frequencies of data records and standard deviations)}, we conclude that it is more powerful than $\mathrm{Atk_{col}}(\mathcal{J}(\mathbf{R}))$. {\color{black}This suggests that in real-world integrated correlation attacks, the malicious SP can   conduct $\mathrm{Atk_{row}}(\mathcal{S}(\mathbf{R}))$  followed by a few rounds of $\mathrm{Atk_{col}}(\mathcal{J}(\mathbf{R}))$ to simultaneously distort a large number of fingerprint bits and preserve data utility when generating the pirated database.}

\subsubsection{Performance of Mitigation Techniques on Census Database} \label{sec:exp_defense_on_census}

We have shown that correlation attacks can distort the fingerprint bit-string and may make the database owner accuse innocent SPs by resulting in low degradation in terms of database utility. In this section, we first evaluate the proposed mitigation techniques against correlation attacks  separately, and then  consider the integrated  mitigation technique  against the integrated correlation attack, i.e., the row-wise correlation attack followed by the column-wise correlation attack. 
In this experiment, we also assume that Alice  has  access to $\mathcal{S}'$ and $\mathcal{J}'$that  are directly computed from $\mathbf{R}$. Thus, we represent her prior knowledge as  $\mathcal{S}'(\mathbf{R})$ and $\mathcal{J}'(\mathbf{R})$.  As a result, we have $\mathcal{S}' = \mathcal{S}$ and $\mathcal{J}' = \mathcal{J}$. 



\noindent\textbf{Performance of $\mathrm{Dfs_{col}}(\mathcal{J}'(\mathbf{R}))$.} As discussed in Section \ref{sec:col_defense}, the mitigation strategy is determined by the marginal probability mass transportation plan, which is heterogeneous for higher $\lambda_p$ (a tuning parameter controlling the entropy of the transportation plan) and homogeneous for lower $\lambda_p$. To evaluate the utility loss due to $\mathrm{Dfs_{col}}(\mathcal{J}'(\mathbf{R}))$, we calculate the utility of  $\widetilde{\mathbf{R}}\big(\mathrm{FP},\emptyset,\mathrm{Dfs_{col}}(\mathcal{J}'(\mathbf{R}))\big)$ by setting $\lambda_p   \in \{100,\cdots,1000\},\forall p\in\mathcal{F}$, and show 
the results in Figure \ref{fig:utility_lambda}. We see that all utilities monotonically increase as the mass transportation plans transform from homogeneous to heterogeneous (i.e., as $\lambda_p$ increases). This is because, as the transportation plans become more heterogeneous, the mitigation technique can tolerate more discrepancy between two marginal distributions (Section \ref{sec:col_defense}), and hence fewer number of entries are modified by $\mathrm{Dfs_{col}}(\mathcal{J}'(\mathbf{R}))$.

\begin{figure}[htb]
  \begin{center}
     \includegraphics[width= 0.6\columnwidth]{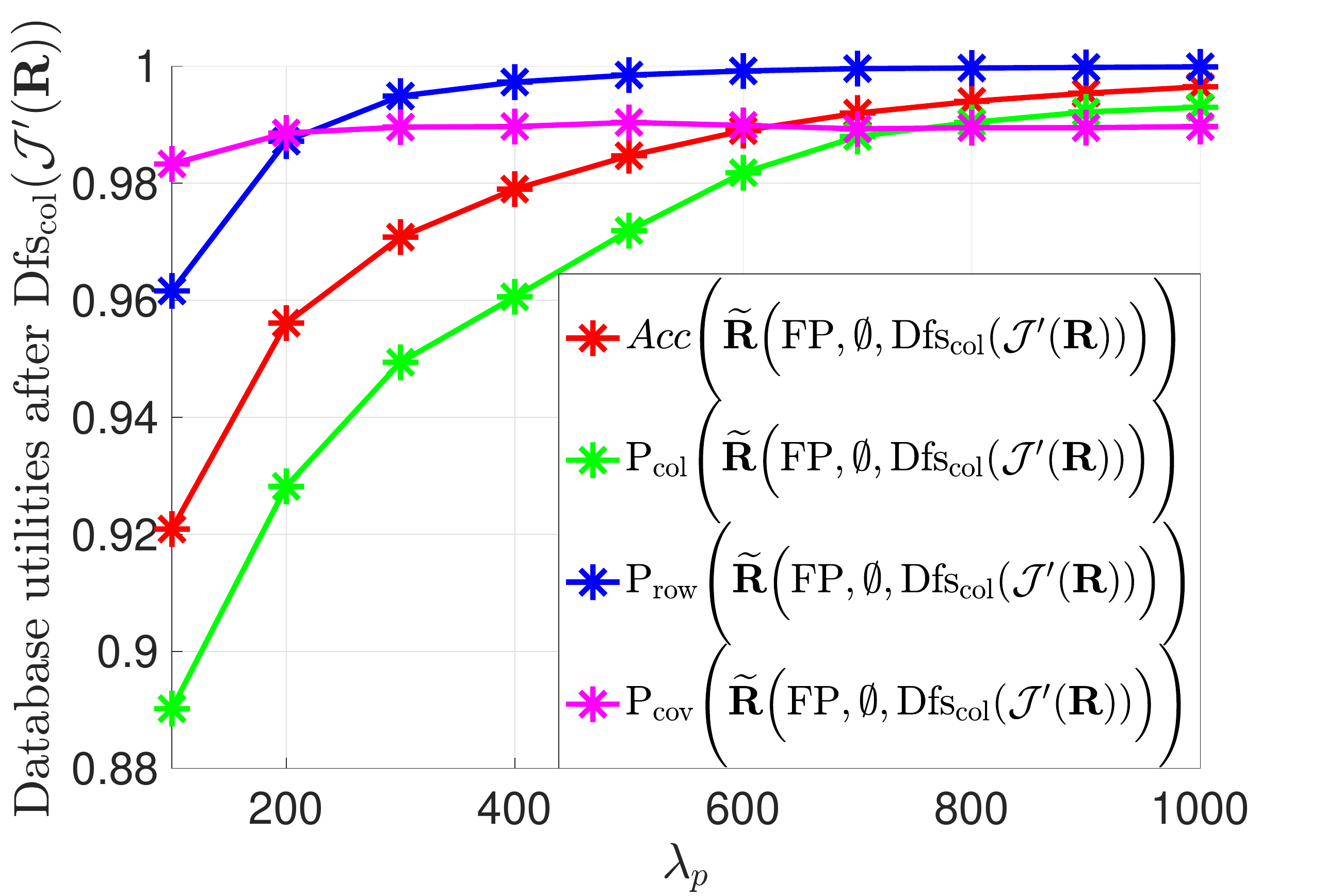}
      \end{center}
  \caption{\label{fig:utility_lambda} Utilities of $\widetilde{\mathbf{R}}(\mathrm{FP},\emptyset,\mathrm{Dfs_{col}}(\mathcal{J}'(\mathbf{R})))$ under varying $\lambda_p$.}
\end{figure}

Next, we fix $\lambda_p  = 500, \forall p\in\mathcal{F}$,   evaluate the performance (in terms of both fingerprint  robustness and database utility) of launching $\mathrm{Atk_{col}}(\mathcal{J}(\mathbf{R}))$  on $\widetilde{\mathbf{R}}\big(\mathrm{FP},\emptyset,\mathrm{Dfs_{col}}(\mathcal{J}'(\mathbf{R}))\big)$ with increasing attack rounds. In Figure \ref{fig:atkcol_after_dfscol}(a), we observe that at then end of 8 rounds of $\mathrm{Atk_{col}}(\mathcal{J}(\mathbf{R}))$, the malicious SP can only compromise 24 (out of 128) fingerprint bits, which is not enough to cause Alice accuse innocent SPs and will make itself uniquely accusable.   In contrast, as shown in Table \ref{table:correlation_attack_compare}, when launching $\mathrm{Atk_{col}}(\mathcal{J}(\mathbf{R}))$ on the vanilla fingerprinted database $\widetilde{\mathbf{R}}(\mathrm{FP},\emptyset,\emptyset)$, the malicious SP can compromise 82 bits and make itself only rank top $91.4\%$ accusable.  This suggests that proposed $\mathrm{Dfs_{col}}(\mathcal{J}'(\mathbf{R}))$ significantly mitigates the column-wise correlation attack.

\begin{figure}[htb]
  \begin{center}
  \begin{tabular}{cc}
     \includegraphics[width= .4\columnwidth]{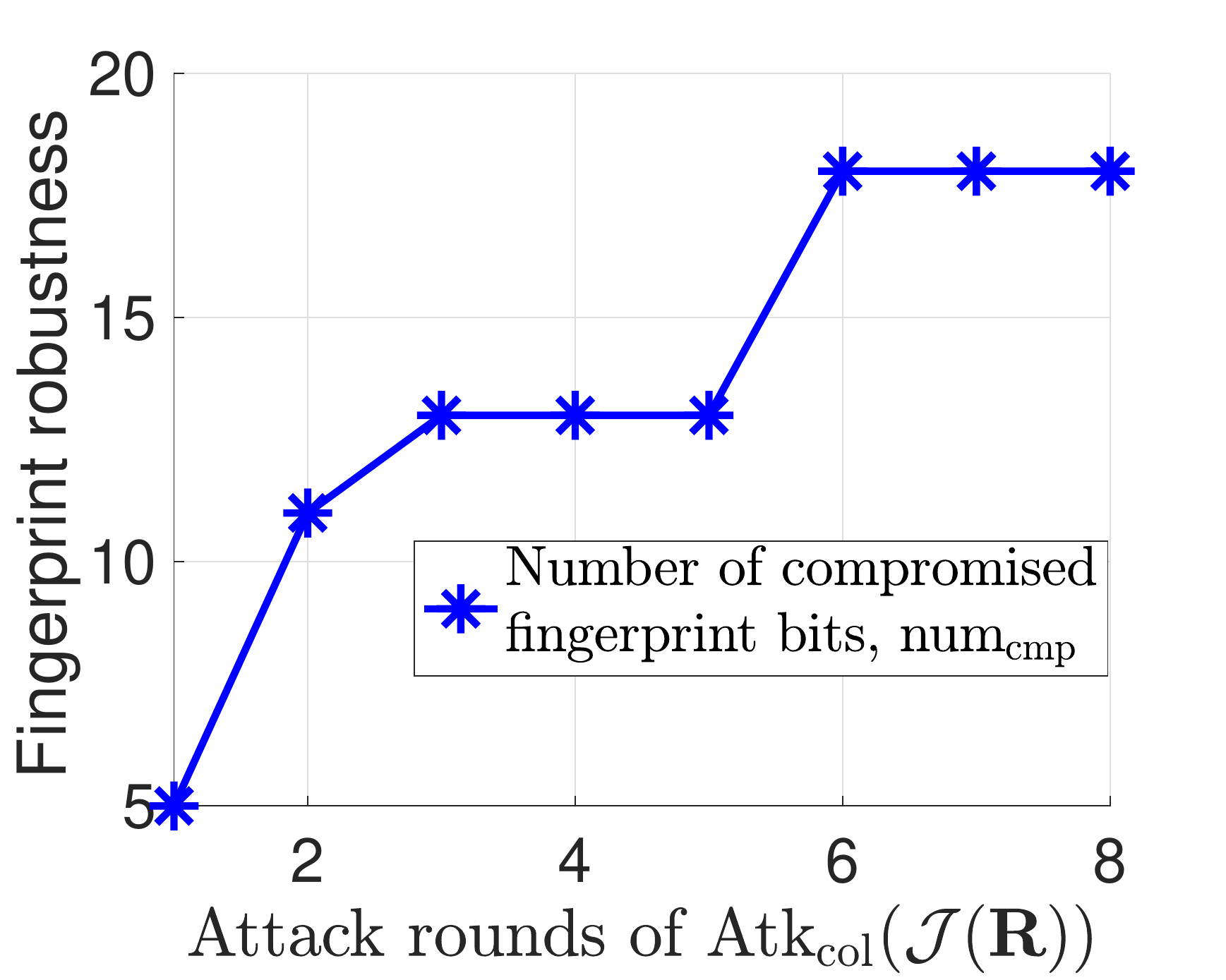}
		&
		     \includegraphics[width= .4\columnwidth]{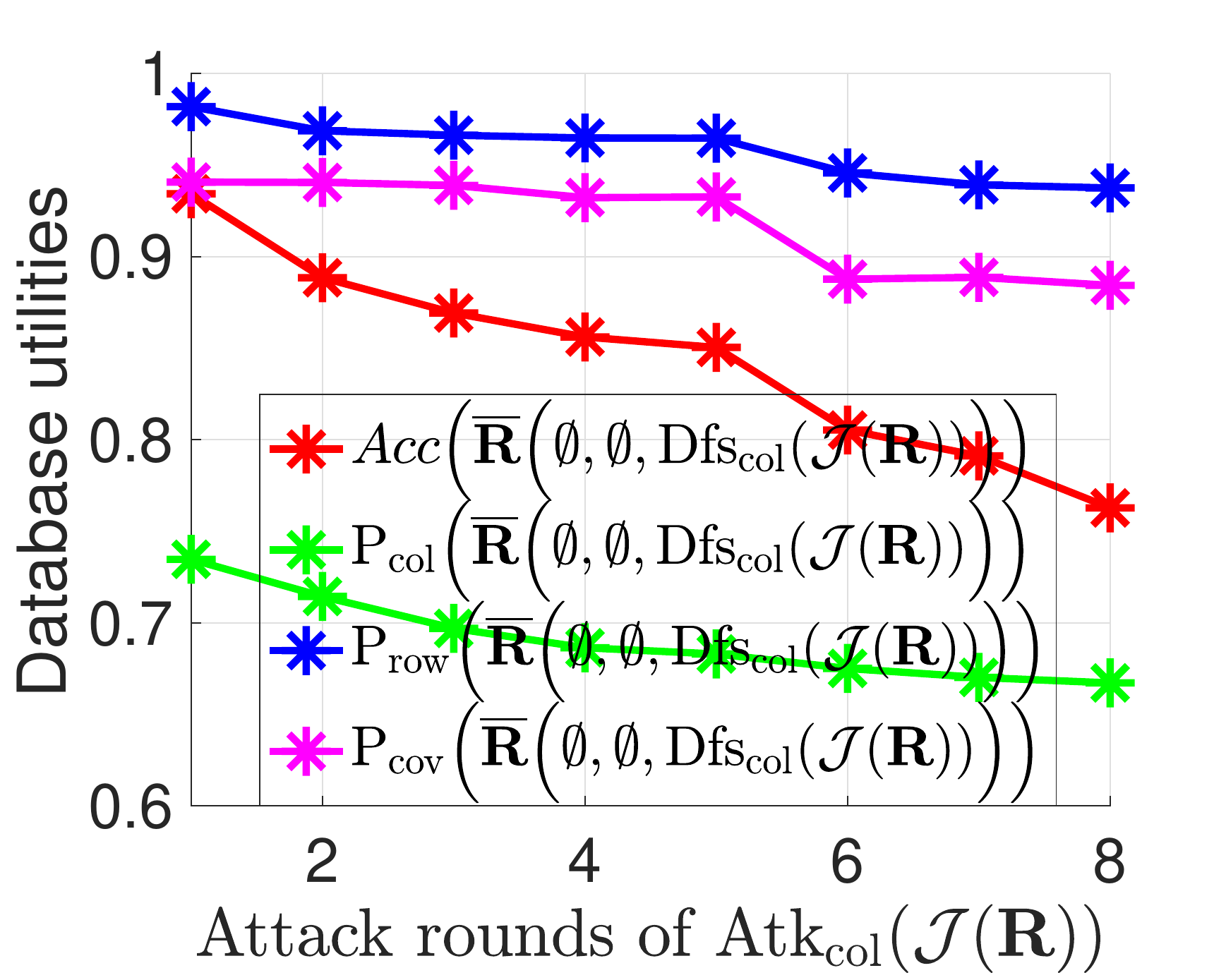}
 \\
    {\small  (a) Fingerprinting robustness}  &
    {\small  (b) Database utilities} \\
    {for increasing attack rounds.} &
     {for increasing attack rounds.} \\
        \end{tabular}
      \end{center}
  \caption{\label{fig:atkcol_after_dfscol} Fingerprint robustness and database utilities when launching $\mathrm{Atk_{col}}(\mathcal{J}(\mathbf{R})))$ on $\widetilde{\mathbf{R}}(\mathrm{FP},\emptyset,\mathrm{Dfs_{col}}(\mathcal{J}'(\mathbf{R})))$.}
\end{figure}

Furthermore, in Figure \ref{fig:atkcol_after_dfscol}(b) we observe that $\mathrm{Atk_{col}}(\mathcal{J}(\mathbf{R}))$ also degrades the utilities of the vanilla fingerprinted database post-processed by $\mathrm{Dfs_{col}}(\mathcal{J}'(\mathbf{R}))$. In particular, the accuracy drops to 0.76 and the preservation of column-wise correlation drops to 0.67 at the end of 8 rounds of $\mathrm{Atk_{col}}(\mathcal{J}(\mathbf{R}))$. Combining  Figures \ref{fig:utility_lambda} and \ref{fig:atkcol_after_dfscol}, we conclude that, as a post-processing step, the proposed column-wise correlation mitigation technique provides robust fingerprint against column-wise correlation attack and preserves   database utility.

\noindent\textbf{Performance of $\mathrm{Dfs_{row}}(\mathcal{S}'(\mathbf{R}))$.} In Table \ref{table:performance-row-defense}, we evaluate the performance of the robust fingerprinted database against row-wise attack, i.e., $\widetilde{\mathbf{R}}(\mathrm{FP},\mathrm{Dfs_{row}}(\mathcal{S}'(\mathbf{R}),\emptyset)$, along with the pirated database obtained by launching $\mathrm{Atk_{row}}(\mathcal{S}(\mathbf{R})$ on it. Clearly, $\mathrm{Dfs_{row}}(\mathcal{S}'(\mathbf{R}))$ successfully defends against $\mathrm{Atk_{row}}(\mathcal{S}(\mathbf{R}))$, since the pirated database only distorts 13 fingerprint bits and makes the malicious SP uniquely accusable. Combining this result with Table \ref{table:correlation_attack_compare} (cells in blue), we conclude that $\mathrm{Dfs_{row}}(\mathcal{S}'(\mathbf{R}))$ not only mitigates the row-wise correlation attack but it also preserves the database utility. 

\begin{table}[htb]
\begin{center}
 \begin{tabular}{c | c|c| c | c | c| c } 
 \hline
 \hline
  &    $Acc$ &  $\mathrm{P_{col}}$  & $\mathrm{P_{row}}$ & $\mathrm{P_{cov}}$ & $\mathrm{num_{cmp}}$ &  $r$ \\
 \hline
$\widetilde{\mathbf{R}}(\mathrm{FP},\mathrm{Dfs_{row}}(\mathcal{S}'),\emptyset)$  &  0.97    & 0.94   & 0.99  &  0.99 & N/A  &   N/A \\ 
$\overline{\mathbf{R}}(\emptyset,\mathrm{Atk_{row}}(\mathcal{S}),\emptyset)$ &    0.93 & 0.92  & 0.94& 0.98 & 13& u  \\
  \hline
  \hline
\end{tabular}
\caption{Impact of  $\mathrm{Dfs_{row}}(\mathcal{S}'(\mathbf{R}))$ before and after $\mathrm{Atk_{row}}(\mathcal{S}(\mathbf{R}))$. $r=\text{u}$ means uniquely accusable.} 
\label{table:performance-row-defense}
\end{center}
\end{table}

\noindent\textbf{Performance of integrated mitigation.} Here, we investigate the performance of the integrated mitigation  against the integrated correlation attacks. By setting $\lambda_p=500, \forall p \in \mathcal{F}$, we evaluate the utility of $\widetilde{\mathbf{R}}\big(\mathrm{FP},\mathrm{Dfs_{row}}(\mathcal{S}'(\mathbf{R})),\mathrm{Dfs_{col}}(\mathcal{J}'(\mathbf{R}))\big)$ before and after it is subject to the integrated attack, i.e.,  $\mathrm{Atk_{row}}(\mathcal{S}(\mathbf{R}))$ followed by $\mathrm{Atk_{col}}(\mathcal{J}(\mathbf{R}))$. We show the results in Table \ref{table:int-dfs-int-atk}. Clearly,  after   integrated mitigation, the fingerprinted database still maintains high utilities. Even if the malicious SP launches  integrated correlation attack, it can only compromise 4 fingerprint bits and  makes itself uniquely accusable. It suggests that the proposed mitigation techniques provide high robustness against integrated correlated attacks.

\begin{table}[htb]
\begin{center}
 \begin{tabular}{c | c|c| c | c | c| c } 
 \hline
 \hline
  &    $Acc$ &  $\mathrm{P_{col}}$  & $\mathrm{P_{row}}$ & $\mathrm{P_{cov}}$ & $\mathrm{num_{cmp}}$ &  $r$ \\
 \hline
after int. mitigation&  0.94    & 0.91   & 0.96  &  0.97 & N/A  &   N/A \\ 
after int. attack&    0.77 & 0.82  & 0.86 & 0.94 & 4& u \\
  \hline
  \hline
\end{tabular}
\caption{Impact of  integrated mitigation    before and after   integrated correlation attack. $r=\text{u}$ means uniquely accusable.} 
\label{table:int-dfs-int-atk}
\end{center}
\end{table}

\section{Conclusion}
\label{sec:conclusion}

In this paper, we have proposed robust fingerprinting for relational databases. First, we have validated  the vulnerability of existing database fingerprinting schemes by identifying different correlation attacks: column-wise correlation attack (which utilizes the joint distributions among attributes), row-wise correlation attack (which utilizes the statistical relationships among the rows), and integration of them. Next, to defend against the identified attacks, we have  developed mitigation techniques that can work as post-processing steps for any off-the-shelf database fingerprinting schemes. Specifically, the column-wise mitigation technique modifies limited entries in the fingerprinted database by solving a set of optimal mass transportation problems concerning pairs of marginal distributions. On the other hand, the row-wise  mitigation technique modifies a small fraction of the fingerprinted database entries by solving a combinatorial search problem. We have also empirically investigated the impact of the identified correlation attacks and the performance of proposed mitigation techniques on an real-world relational database. Experimental results show high success rates for the correlation attacks and high robustness for the proposed mitigation techniques, which alleviate the  attacks   having access to correlation models directly calculated from the data.


\begin{acks}
Research reported in this publication was supported by the National Library Of Medicine of the National Institutes of Health under Award Number R01LM013429.
\end{acks}


\bibliographystyle{ACM-Reference-Format}
\bibliography{fingerprinting}


\appendix

\section{Tradeoff Between Fingerprint Robustness and Database Utility}\label{sec:utility_LSB}
As discussed in Section \ref{sec:vanilla}, to preserve database utility, the added fingerprint only changes the LSB of database entries. In this experiment, we show that if the fingerprint bits are embedded into other bits of entries, some utility metrics will decrease. Specifically, by fixing the fingerprinting ratio to $1/30$, we evaluate the utility (e.g., preservation of correlations and statistics metrics) of the fingerprinted Census database obtained by using the vanilla fingerprinting scheme and changing one of the least $k$ ($k\geq 2$) significant bits (i.e., L$k$SB) of database entries (to add the fingerprint). We show the results in Table \ref{table:LkSB}.


\vspace{-3mm}

\begin{table}[H]
\begin{center}
 \begin{tabular}{c | c | c  | c |c}   
 \hline
 \hline
Utilities & LSB & L2SB & L3SB & L4SB \\
  \hline
  $Acc\Big(\widetilde{\mathbf{R}}(\mathrm{FP},\emptyset,\emptyset)\Big)$ &  0.98 &  0.98& 0.98 & 0.98\\
 $\mathrm{P_{col}}\Big(\widetilde{\mathbf{R}}(\mathrm{FP},\emptyset,\emptyset)\Big)$ & 0.95 & 0.90  & 0.88 & 0.86\\
 $\mathrm{P_{row}}\Big(\widetilde{\mathbf{R}}(\mathrm{FP},\emptyset,\emptyset)\Big)$ & 1.00 & 0.98  & 0.98 & 0.98\\
  $\mathrm{P_{cov}}\Big(\widetilde{\mathbf{R}}(\mathrm{FP},\emptyset,\emptyset)\Big)$ & 0.99 & 0.96 & 0.95 &0.94\\
     \hline
 \hline
\end{tabular}
\caption{Different database utility values obtained when the insertion of  fingerprint changes one of the least $k$ significant bits of database entries.} 
\label{table:LkSB}
\end{center}
\end{table}

\vspace{-4mm}
We observe that although all fingerprinted databases achieve the same accuracy  when the fingerprinting ratio is set to be $1/30$, other utilities decrease if the added fingerprint changes L$k$SB ($k\geq 2$) of data entries. Especially, the preservation of column-wise correlation degrades the most as $k$ increases. The reason is that some pairs of attributes are highly correlated and changing one of the L$k$SB may create statistical unlikely pairs. 
For example,   Masters education degree corresponds to education of 14 years, if the L4SB of 14 (``1110'') is flipped, we end up with an individual who has a master degree with only 6 (``0110'') years of education, which compromise the correlation between ``education'' and ``education-num''.


\end{document}